\providecommand{\algorithmname}{Algorithm}
  \theoremstyle{plain}
  \newtheorem*{assumption*}{\protect\assumptionname}
  \theoremstyle{plain}
  \newtheorem*{thm*}{\protect\theoremname}
  \theoremstyle{plain}
  \newtheorem{prop}{\protect\propositionname}
  \theoremstyle{definition}
  \newtheorem{defn}{\protect\definitionname}
  \theoremstyle{remark}
  \newtheorem*{rem*}{\protect\remarkname}
\tikzstyle{trigl}=[isosceles triangle,  
\newdimen\XCoord
\newdimen\YCoord
\newdimen\Bottom
  \providecommand{\assumptionname}{Assumption}
  \providecommand{\definitionname}{Definition}
  \providecommand{\propositionname}{Proposition}
  \providecommand{\remarkname}{Remark}
  \providecommand{\theoremname}{Theorem}
\begin{document}
\newcommand{\iid}{\stackrel{\mathrm{iid}}{\sim}}

\title{Forest resampling for distributed sequential Monte Carlo}

\author{
Anthony Lee\thanks{Department of Statistics, University of Warwick} \and 
Nick Whiteley\thanks{School of Mathematics, University of Bristol}\footnotemark[2]}

\twocolumn[
\maketitle 

{\bf Abstract}:
This paper brings explicit considerations of distributed computing architectures and data structures into the rigorous design of Sequential Monte Carlo (SMC) methods. A theoretical result established recently by the authors shows that adapting interaction between particles to suitably control the Effective Sample Size (ESS) is sufficient to guarantee stability of SMC algorithms. Our objective is to leverage this result and devise algorithms which are thus guaranteed to work well in a distributed setting. We make three main contributions to achieve this. Firstly, we study mathematical properties of the ESS as a function of matrices and graphs that parameterize the interaction amongst particles. Secondly, we show how these graphs can be induced by tree data structures which model the logical network topology of an abstract distributed computing environment. Thirdly, we present efficient distributed algorithms that achieve the desired ESS control, perform resampling and operate on forests associated with these trees. \bigskip

{\bf Keywords}: data structures; distributed computing; effective sample size; particle filters \bigskip \bigskip
]
\saythanks

\section{Introduction}

SMC algorithms are interacting particle methods for approximating
sequences of distributions arising in statistics, and are commonly
applied to Hidden Markov Models (HMM's) for filtering and marginal
likelihood estimation (see, e.g., \citep{doucet2000sequential,Doucet2008}).
We focus here on this HMM setting for simplicity, although our methodology
is relevant to other SMC schemes, such as \citep{smc:meth:C02}, \citep{DelMoral2006}
and \citep{chopin2013smc2}. It is becoming increasingly important
that computationally intensive algorithms are suited to implementation
on many-core computing architectures (see, e.g., \citep{Suchard2009}),
and it is well established that standard SMC algorithms naturally
have this property (see, e.g., \citep{lee2010utility}). In particular,
the time in which such algorithms run on many-core devices is typically
sublinear in the number of particles, $N$, until $N$ reaches a device-
and application-specific critical size, resulting in significant performance
improvements for moderate numbers of particles. However, the number
of particles required for acceptable accuracy in various settings
can be substantially larger than this critical size. In order to provide
accurate estimates in these situations in a timely fashion, attention
is naturally drawn to distributed implementations of SMC algorithms,
in which particles are distributed over multiple devices which can
communicate over a network (see, amongst others, \citep{bolic2005resampling,jun2012entangled,verge2013parallel}).
In this environment the interactions between particles, which provide
fundamental stability properties of the algorithm, are costly due
to relatively slow network speeds in comparison to fast on-device
memory accesses.

Motivated by the desire to develop Monte Carlo algorithms whose communication
structure is more naturally suited to distributed architectures, \citet{whiteley2013role}
proposed and studied a generalization of standard SMC algorithms,
called $\alpha$SMC, in which interaction between particles may be
modulated in an on-line fashion. The ``$\alpha$'' in $\alpha$SMC
refers to certain matrices which are chosen adaptively as the algorithm
runs, dictating or constraining this interaction. A special case of
$\alpha$SMC is the popular adaptive resampling strategy originally
proposed by \citet{liu1995blind}. One of the main results of \citep{whiteley2013role}
is a stability theorem which shows that, subject to regularity conditions
on the HMM, adapting $\alpha$ so as to enforce an appropriate lower
bound on the ESS is sufficient to ensure time-uniform convergence
of $\alpha$SMC filtering estimates, and endow it with other attractive
theoretical properties so that the computational cost of the algorithm
grows manageably with the length of the data record. This provides
a criterion for stabilization of these algorithms when communication
constraints influence interaction.

Monitoring and controlling the ESS using $\alpha$ matrices is therefore
very important. However, if implemented naively, this monitoring and
control itself involves collective operations on the entire particle
system, and so remains as an obstacle to parallelization. In this
paper, our overall aim is to address this obstacle and formulate approaches
to ESS control which are more appropriate for distributed implementation.
In order to do so, we consider a logical tree topology which represents
an abstract distributed computing environment. This network structure
accommodates divide-and-conquer routines and recursive programming,
making it suited to distributed computation, and its hierarchical
nature lends itself to partitioning and resampling operations. We
consider methods of ESS control involving computations which are \emph{local}
with respect to the topology of these trees. 

After outlining $\alpha$SMC in Section~\ref{sec:asmc}, our first
original contribution in Section~\ref{sec:effective_sample_size}
is a study of the ESS itself, as a functional of the $\alpha$ matrix
governing interaction. This study leads us to consider a subset of
potential $\alpha$ matrices with a specific associated graphical
structure. We then define a partial order on this set of matrices,
which makes precise a sense in which they are more or less suited
to distributed architectures, and prove that the ESS is (partial)
order-preserving. This important relationship connects computational
considerations with statistical performance and informs our algorithm
design. Section~\ref{sec:effective_sample_size} culminates in a
lower bound on the ESS phrased in terms of particle sub-populations,
and applied recursively this bound leads to an abstract recursive
algorithm for enforcing a lower bound on the population-wide ESS.
Crucially, each recursive call of this algorithm can require the consideration
of only a small number of aggregated weights, and this is what makes
it suited to distributed architectures. Section~\ref{sec:Forest-resampling}
is devoted to practical implementation of this abstract recursive
algorithm in a distributed setting using trees, in such a way that
all quantities required are available via local computations whose
cost is independent of $N$. An interpretation of the resulting resampling
scheme is that it corresponds to a tree sampling procedure involving
a number of disjoint trees, and so we term the overall procedure forest
resampling. All proofs are given in the appendix.

\section{$\alpha$SMC\label{sec:asmc}}

In this section we overview relevant aspects of the general methodology
proposed in \citep{whiteley2013role}. An HMM with measurable state
space $\left(\mathsf{X},\mathcal{X}\right)$ and observation space
$\left(\mathsf{Y},\mathcal{Y}\right)$ is a process $\left\{ \left(X_{n},Y_{n}\right);n\geq0\right\} $
where $\left\{ X_{n};n\geq0\right\} $ is a Markov chain on $\mathsf{X}$,
the observations $\left\{ Y_{n};n\geq0\right\} $, valued in $\mathsf{Y}$,
are conditionally independent given $\left\{ X_{n};n\geq0\right\} $,
and the conditional distribution of each $Y_{n}$ depends on $\left\{ X_{n};n\geq0\right\} $
only through $X_{n}$. Let $\pi_{0}$ and $f$ be respectively a probability
distribution and a Markov kernel on $\left(\mathsf{X},\mathcal{X}\right)$,
and let $g$ be a Markov kernel acting from $\left(\mathsf{X},\mathcal{X}\right)$
to $\left(\mathsf{Y},\mathcal{Y}\right)$, with $g(x,\cdot)$ admitting
a density, denoted similarly by $g(x,y)$, with respect to some dominating
$\sigma$-finite measure. The HMM specified by $\pi_{0}$, $f$ and
$g$, is
\begin{eqnarray}
 &  & X_{0}\sim\pi_{0},\nonumber \\
 &  & X_{n}\mid\left\{ X_{n-1}=x_{n-1}\right\} \sim f(x_{n-1},\cdot),\quad n\geq1,\nonumber \\
 &  & Y_{n}\mid\left\{ X_{n}=x_{n}\right\} \sim g(x_{n},\cdot),\quad n\geq0.\label{eq:HMM}
\end{eqnarray}
Throughout this paper we consider a fixed observation sequence $\left\{ y_{n};n\geq0\right\} $
and write 
\begin{equation}
g_{n}(x):=g(x,y_{n}),\quad n\geq0.\label{eq:g_n}
\end{equation}
Throughout this paper we shall work under the mild assumption that
for each $n\geq0$, $\sup_{x\in\mathsf{X}}g_{n}(x)<+\infty$ and $g_{n}(x)>0$
for all $x\in\mathsf{X}$. 

For $n\geq1$, let $\pi_{n}$ be the conditional distribution of $X_{n}$
given $Y_{0:n-1}=y_{0:n-1}$, called the \emph{prediction filter};
and let $Z_{n}$ be the marginal likelihood of the first $n$ observations,
evaluated at the point $y_{0:n-1}$. Due to the conditional independence
structure of the HMM the following recursions hold:
\[
\pi_{n}\left(A\right)=\frac{\int_{\mathsf{X}}\pi_{n-1}\left(dx\right)g_{n-1}(x)f(x,A)}{\int_{\mathsf{X}}\pi_{n-1}\left(dx\right)g_{n-1}(x)},\quad A\in\mathcal{X},\; n\geq1,
\]
and
\[
Z_{n}=Z_{n-1}\int_{\mathsf{X}}\pi_{n-1}\left(dx\right)g_{n-1}\left(x\right),\quad n\geq1,
\]
with the convention $Z_{0}:=1$. Our main computational objectives
are to approximate $\left\{ \pi_{n};n\geq0\right\} $ and $\left\{ Z_{n};n\geq0\right\} $.

We write $[M]:=\{1,\ldots,M\}$ for a generic $M\in\mathbb{N}$. We
denote by $N$ an arbitrary but fixed positive integer representing
the number of particles in the algorithm we are about to describe.\emph{
}To simplify presentation, whenever a summation sign appears without
the summation set made explicit, the summation set is taken to be
$[N]$, for example we write $\Sigma_{i}$ to mean $\Sigma_{i=1}^{N}$.

Let $\mathbb{A}_{[N]}$ be the set of doubly stochastic matrices of
size $N\times N$ (this is a special case of the setup of \citep{whiteley2013role},
corresponding to their assumption $({\bf B}^{++})$). The $\alpha$SMC
algorithm simulates a sequence $\left\{ \zeta_{n};n\geq0\right\} $
with each $\zeta_{n}:=\left(\zeta_{n}^{1},\ldots,\zeta_{n}^{N}\right)$
valued in $\mathsf{X}^{N}$. When $n\geq1$, this involves choosing
a matrix $\alpha_{n-1}$ from $\mathbb{A}_{[N]}$ according to some
deterministic function of $\left\{ \zeta_{0},\ldots,\zeta_{n-1}\right\} $,
and this matrix specifies the type of interaction that occurs at time
$n$. 

\begin{algorithm}[H]
\begin{raggedright}
For $n=0$,
\par\end{raggedright}

\begin{raggedright}
\qquad{}For $i=1,\ldots,N$,
\par\end{raggedright}

\begin{raggedright}
\qquad{}\qquad{}Set\quad{} $W_{0}^{i}=1$.
\par\end{raggedright}

\begin{raggedright}
\qquad{}\qquad{}Sample\quad{} $\zeta_{0}^{i}\sim\pi_{0}$.
\par\end{raggedright}

\begin{raggedright}
For $n\geq1$,
\par\end{raggedright}

\begin{raggedright}
$(\star)$\hspace*{0.25cm}Select $\alpha_{n-1}$ from $\mathbb{A}_{[N]}$
as a function of $\left\{ \zeta_{0},\ldots,\zeta_{n-1}\right\} $
\par\end{raggedright}

\begin{raggedright}
\qquad{}For $i=1,\ldots,N$,
\par\end{raggedright}

\begin{raggedright}
$(\dagger)$\qquad{}Set\quad{} $W_{n}^{i}=\sum_{j}\alpha_{n-1}^{ij}W_{n-1}^{j}g_{n-1}(\zeta_{n-1}^{j})$.
\par\end{raggedright}

\begin{raggedright}
$(\ddagger)$\qquad{}Sample\quad{}
\begin{eqnarray*}
 &  & \zeta_{n}^{i}|\zeta_{0},\ldots,\zeta_{n-1}\\
 &  & \quad\sim\dfrac{\sum_{j}\alpha_{n-1}^{ij}W_{n-1}^{j}g_{n-1}(\zeta_{n-1}^{j})f(\zeta_{n-1}^{j},\cdot)}{\sum_{j}\alpha_{n-1}^{ij}W_{n-1}^{j}g_{n-1}(\zeta_{n-1}^{j})}
\end{eqnarray*}

\par\end{raggedright}

\protect\caption{$\alpha$SMC\label{alg:aSMC}}
\end{algorithm}

With $\delta_{x}$ denoting the Dirac measure centred on $x$, the
objects
\begin{equation}
\pi_{n}^{N}:=\frac{\sum_{i}W_{n}^{i}\;\delta_{\zeta_{n}^{i}}}{\sum_{i}W_{n}^{i}},\quad Z_{n}^{N}:=\frac{1}{N}\sum_{i}W_{n}^{i},\quad n\geq0,\label{eq:particle_approximations}
\end{equation}
are regarded as approximations of $\pi_{n}$ and $Z_{n}$, respectively. 

In general, some algorithm design is involved at line $(\star)$ of
Algorithm~\ref{alg:aSMC}; one has to decide on a rule which dictates
how $\alpha_{n-1}$ is chosen from $\mathbb{A}_{[N]}$, and in practice
one will often select $\alpha_{n-1}$ from $\mathbb{A}_{[N]}$ as
some function of $\left(W_{n-1}^{1},\ldots,W_{n-1}^{N}\right)$ and
$\left(g_{n-1}(\zeta_{n-1}^{1}),\ldots,,g_{n-1}(\zeta_{n-1}^{N})\right)$.
Two members of $\mathbb{A}_{[N]}$ used implicitly in methods predating
$\alpha$SMC are: $\mathbf{1}_{1/N}$, the $N\times N$ matrix which
has $1/N$ as every entry; and $Id$, the identity matrix. If $\alpha_{n}=\mathbf{1}_{1/N}$
for every $n$, $\alpha$SMC reduces to the bootstrap particle filter,
whereas if $\alpha_{n}=Id$ for every $n$, $\alpha$SMC reduces to
sequential importance sampling. If at each time step one chooses adaptively
between $\mathbf{1}_{1/N}$ and $Id$, $\alpha$SMC is equivalent
to the adaptive resampling method of \citep{liu1995blind}. We refer
the reader to \citep[Section 2.2]{whiteley2013role} for the details
of these equivalences.

The ESS associated with the weights $\left\{ W_{n}^{i}:i\in[N]\right\} $is
\begin{equation}
N_{n}^{\text{eff}}:=\frac{\left(\sum_{i}W_{n}^{i}\right)^{2}}{\sum_{i}\left(W_{n}^{i}\right)^{2}}.\label{eq:ESS_defn_front}
\end{equation}
Looking also at line $(\dagger)$ of Algorithm~\ref{alg:aSMC}, we
see that $N_{n}^{\text{eff}}$ clearly depends on $\alpha_{n-1}$
but not on $\zeta_{n}$. Therefore $\alpha_{n-1}$ can be selected
adaptively to ensure that $N_{n}^{\text{eff}}$ exceeds some threshold
\emph{before $\zeta_{n}$} is simulated. In this paper we investigate
methods to carry out this kind of adaptive selection, with $\alpha_{n-1}$
chosen from a large family of matrices which includes $\mathbf{1}_{1/N}$
and $Id$.

One of the main contributions of \citep{whiteley2013role} is the
stability theorem stated below, which gives a rigorous theoretical
justification for enforcing a lower bound on $N_{n}^{\text{eff}}$.
This theorem relies on the following regularity condition on the HMM,
which is often used to establish stability results for non-adaptive
SMC algorithms (see e.g., \citep{smc:the:DMG01,smc:the:CdMG11,whiteley2014twisted},
and see also \citep{whiteley2013stability} for stability under weaker
conditions).
\begin{assumption*}
$\mathbf{(C)}$ There exists $\left(\delta,\epsilon\right)\in[1,\infty)^{2}$
such that
\[
\sup_{n\geq0}\sup_{x,y}\frac{g_{n}(x)}{g_{n}(y)}\leq\delta,\quad\quad f(x,\cdot)\leq\epsilon f(y,\cdot),\quad(x,y)\in\mathsf{X}^{2}.
\]

\end{assumption*}
For $\mu$ a measure on $(\mathsf{X},\mathcal{X})$ and $\varphi$
a real-valued, $\mathcal{X}$-measurable function on $\mathsf{X}$
we define $\mu(\varphi):=\int_{\mathsf{X}}\varphi(x)\mu(dx)$, allowing
us to compare $\pi_{n}^{N}$ with $\pi_{n}$ via the differences $\pi_{n}^{N}(\varphi)-\pi_{n}(\varphi)$,
for suitable $\varphi$. For example, when $A\in\mathcal{X}$ and
$\varphi=\mathbf{1}_{A}$ then $\pi_{n}(\varphi)$ is the conditional
probability that $X_{n}\in A$ given $Y_{0:n-1}=y_{0:n-1}$ and $\pi_{n}^{N}(\varphi)$
its $\alpha$SMC estimate.
\begin{thm*}
\label{thm:L_R_mix}\citep[ Theorem 2]{whiteley2013role} Assume $\mathbf{(C)}$.
Then\textup{\emph{ there exist finite constants $c_{1}$ and for any
$r\geq1$, $c_{2}(r)$, such that for any $N\geq1$ and $\tau\in(0,1]$,
if 
\begin{equation}
\inf_{n\geq0}N_{n}^{\text{eff}}\geq N\tau,\label{eq:ess_lower_bounded_condition}
\end{equation}
then
\[
\sup_{n\geq1}\;\mathbb{E}\left[\left(\dfrac{Z_{n}^{N}}{Z_{n}}\right)^{2}\right]^{1/n}\leq1+\dfrac{c_{1}}{N\tau},
\]
and for any $\varphi:\mathsf{X}\rightarrow\mathbb{R}$ which is $\mathcal{X}$-measurable
and bounded,
\[
\sup_{n\geq0}\;\mathbb{E}\left[\left|\pi_{n}^{N}(\varphi)-\pi_{n}(\varphi)\right|^{r}\right]^{1/r}\leq\left\Vert \varphi\right\Vert _{\infty}\dfrac{c_{2}(r)}{\sqrt{N\tau}}.
\]
}}
\end{thm*}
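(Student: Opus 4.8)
The plan is to reduce both assertions to the control of the Monte Carlo error introduced at each individual time step, and then to accumulate these local errors using the forgetting properties implied by $\mathbf{(C)}$. First I would record the exact recursive structure of the algorithm. Since each $\alpha_{n-1}\in\mathbb{A}_{[N]}$ is doubly stochastic and measurable with respect to $\mathcal{F}_{n-1}:=\sigma(\zeta_0,\ldots,\zeta_{n-1})$, summing line $(\dagger)$ over $i$ and using that the columns of $\alpha_{n-1}$ sum to one gives $\sum_i W_n^i=\sum_j W_{n-1}^j g_{n-1}(\zeta_{n-1}^j)$, whence $Z_n^N=Z_{n-1}^N\,\pi_{n-1}^N(g_{n-1})$, mirroring exactly the recursion $Z_n=Z_{n-1}\pi_{n-1}(g_{n-1})$ satisfied by the true quantities. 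Introducing the unnormalized empirical measures $\gamma_n^N(\varphi):=Z_n^N\pi_n^N(\varphi)$ and their deterministic counterparts $\gamma_n(\varphi):=Z_n\pi_n(\varphi)$, I would then verify by induction over $n$, using the conditional law at line $(\ddagger)$, that $\mathbb{E}[\gamma_n^N(\varphi)\mid\mathcal{F}_{n-1}]$ equals a one-step push-forward of $\gamma_{n-1}^N$, so that in particular $\gamma_n^N$ is unbiased for $\gamma_n$ and $\mathbb{E}[(Z_n^N/Z_n)^2]=1+\mathrm{Var}(Z_n^N)/Z_n^2$.

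Next I would set up the standard telescoping decomposition of the global error into a sum of martingale increments indexed by time, each of the form of a local sampling fluctuation at step $p$ propagated forward through the Feynman--Kac semigroup associated with the kernels $g_p f$. The essential new ingredient is to quantify the increment at step $p$. Conditionally on $\mathcal{F}_{p-1}$, line $(\ddagger)$ samples the $\zeta_p^i$ independently from mixtures whose mixing weights are the normalized entries built from $\alpha_{p-1}^{ij}W_{p-1}^j g_{p-1}(\zeta_{p-1}^j)$; a direct conditional-variance computation then bounds the $p$-th local error by a constant multiple of $\sum_i (W_p^i)^2/(\sum_i W_p^i)^2$. But this quantity is precisely $1/N_p^{\text{eff}}$, so the hypothesis \eqref{eq:ess_lower_bounded_condition} yields the key per-step bound
\[
\frac{\sum_i (W_p^i)^2}{\left(\sum_i W_p^i\right)^2}=\frac{1}{N_p^{\text{eff}}}\leq\frac{1}{N\tau}.
\]
This is the pivotal point: the ESS functional, which the algorithm enforces at line $(\star)$ through the adaptive choice of $\alpha$, is exactly what governs the local Monte Carlo variance, and it does so uniformly over all admissible doubly stochastic interaction matrices.

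Finally I would accumulate the local errors. For the normalizing constant, the martingale increments are orthogonal, so summing their variances and bounding each by $c/(N\tau)$ gives a relative second moment that is at most linear in $n$, and since $1+cn/(N\tau)\leq(1+c_1/(N\tau))^n$ for suitable $c_1$, taking $n$-th roots delivers the first display. For the filter I would use the identity $\pi_n^N(\varphi)-\pi_n(\varphi)=\gamma_n^N(\mathbf{1})^{-1}\,\gamma_n^N(\varphi-\pi_n(\varphi))$, noting that the centred test function satisfies $\gamma_n(\varphi-\pi_n(\varphi))=0$, and apply the Marcinkiewicz--Zygmund inequality to each local increment to pass from $L^2$ to general $L^r$, at the cost of the $r$-dependent constant $c_2(r)$ and producing the $1/\sqrt{N\tau}$ rate. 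Assumption $\mathbf{(C)}$ enters precisely here: the minorization $f(x,\cdot)\leq\epsilon f(y,\cdot)$ furnishes geometric contraction of the semigroup, while $\sup_{x,y} g_n(x)/g_n(y)\leq\delta$ controls the multiplicative potential fluctuations, so that the contribution of the step-$p$ error to the time-$n$ estimate decays geometrically in $n-p$. This forgetting is what renders the sum of propagated local errors bounded uniformly in $n$, yielding the $\sup_{n}$ statements rather than bounds that degrade with time.

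The principal difficulty, and the point separating this argument from classical particle-filter stability proofs, is the adaptive, matrix-modulated and data-dependent nature of the interaction. Because $\alpha_{n-1}$ is a general element of $\mathbb{A}_{[N]}$ chosen as a function of the entire past, the local-variance and contraction estimates must hold uniformly over this class rather than for the single fixed choice $\mathbf{1}_{1/N}$ underlying the bootstrap filter; in particular one must check that doubly stochasticity \emph{alone} suffices both to preserve the unbiased Feynman--Kac structure and to collapse the relevant local-error term onto the ESS functional. Handling the coupling between the selection rule and the martingale structure --- ensuring the increments remain genuinely conditionally centred even though $\alpha_{n-1}$ depends on $\mathcal{F}_{n-1}$ --- is the technical heart of the proof.
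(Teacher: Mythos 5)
You should first be aware that the paper you are working from does not prove this statement at all: it is imported verbatim, with attribution, as Theorem 2 of \citep{whiteley2013role}, and the appendix of the present paper contains proofs only of its own Propositions (the ESS properties, the partial order results, and the correctness of the tree algorithms). There is therefore no in-paper proof to compare your attempt against; all one can do is judge whether your outline is a viable route to the cited result. On that score, you have correctly isolated the mechanisms that make the theorem work: double stochasticity of $\alpha_{n-1}$ makes $\sum_i W_n^i$ evolve exactly as the unnormalized Feynman--Kac flow, so the martingale/unbiasedness structure survives the fact that $\alpha_{n-1}$ is chosen adaptively from $\mathcal{F}_{n-1}$; conditional independence of the $\zeta_n^i$ collapses every local conditional variance onto $1/N_n^{\text{eff}}\leq 1/(N\tau)$; and $\mathbf{(C)}$ supplies the contraction needed for time-uniformity.

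Two steps of your accumulation argument are genuinely gapped, however. First, for the normalizing constant, orthogonality of the martingale increments does not yield a bound ``linear in $n$'': the second moment of the $p$-th increment is not $O(1/(N\tau))$ in absolute terms, but is of order $Z_n^2\,\mathbb{E}\left[(Z_p^N/Z_p)^2\right]/(N\tau)$, because the local fluctuation multiplies the (random) past estimate $Z_p^N$. What you actually obtain is a recursion of the form $u_n\leq 1+\frac{c}{N\tau}\sum_{p\leq n}u_p$ with $u_p=\mathbb{E}\left[(Z_p^N/Z_p)^2\right]$, whose discrete-Gronwall solution is the geometric bound $(1+c_1/(N\tau))^n$ --- exactly the form the theorem asserts, but it must be extracted from the recursion, not from a sum of $n$ terms of uniform size. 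Second, for the filter, your identity $\pi_n^N(\varphi)-\pi_n(\varphi)=\gamma_n^N(\mathbf{1})^{-1}\gamma_n^N(\varphi-\pi_n(\varphi))$ is algebraically correct but exposes you to inverse moments of the random normalization, which $\mathbf{(C)}$ does not control uniformly in $n$ (pointwise one only has $Z_n^N/Z_n\in[\delta^{-n},\delta^n]$); moreover the geometric decay in $n-p$ that you invoke belongs to the \emph{normalized} semigroup, whereas the martingale increments in the $\gamma$-decomposition propagate through the \emph{unnormalized} semigroup, which does not contract. The standard repair, consistent with the proof technology of the cited paper, is to telescope the normalized nonlinear flow directly, writing $\pi_n^N-\pi_n$ as a sum over $p$ of terms $\Phi_{p,n}(\pi_p^N)-\Phi_{p,n}(\Phi_p(\pi_{p-1}^N))$, bounding each by the Dobrushin coefficient of $\Phi_{p,n}$ (geometrically small under $\mathbf{(C)}$) times a local sampling error of order $(N\tau)^{-1/2}$ in $L^r$ via Marcinkiewicz--Zygmund. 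With those two repairs your outline becomes a correct proof skeleton.
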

In this paper our objective is to design instances of $\alpha$SMC
which guarantee (\ref{eq:ess_lower_bounded_condition}) whilst achieving
a desirable balance between the communication costs associated with
steps $(\star)$ and $(\ddagger)$ of Algorithm~\ref{alg:aSMC}.
\citet[Section 5.3]{whiteley2013role} suggested some procedures for
adaptively selecting $\alpha_{n-1}$ from $\mathbb{A}_{[N]}$ at line
$(\star)$. However, a practical issue concerning these adaptive procedures
is that guaranteeing (\ref{eq:ess_lower_bounded_condition}) involves
evaluating the ESS for some candidate $\alpha_{n-1}$'s, and this
task may itself be demanding in terms of communication cost. Indeed,
if one wishes to search through a large set of candidates for $\alpha_{n-1}$,
e.g. when attempting to guarantee (\ref{eq:ess_lower_bounded_condition})
with as sparse an $\alpha_{n-1}$ as possible, the cost of step $(\star)$
may dominate the overall cost of Algorithm~\ref{alg:aSMC}. 

On the other hand, the adaptive resampling particle filter \citep{liu1995blind}
involves only the two candidates $Id$ and $\mathbf{1}_{1/N}$; evaluating
the ESS for the candidate $Id$ can be done cheaply, and if $\alpha_{n-1}=\mathbf{1}_{1/N}$,
then we always have $N_{n}^{\text{eff}}=N$, so step $(\star)$ is
inexpensive. However, if $\alpha_{n-1}=Id$ does not achieve $N_{n}^{\text{eff}}\geq N\tau$
there is no choice but to set $\alpha_{n-1}=\mathbf{1}_{1/N}$, and
one then incurs the communication cost associated with the resulting
population-wide interaction at step $(\ddagger)$. 

To help us understand how we can achieve (\ref{eq:ess_lower_bounded_condition})
using sparse $\alpha_{n-1}$, but without excessive communication,
we proceed with an investigation of the ESS.

\section{Properties of the ESS\label{sec:effective_sample_size}}

\subsection{Dependence of the ESS on $\alpha$\label{sub:Properties-of-the_ESS}}

Slightly extending our notation, for each non-empty $V\subseteq[N]$
let $\mathbb{A}_{V}$ be the set of all substochastic $N\times N$
matrices $a$ with the following properties:
\begin{enumerate}
\item $a$ leaves the uniform distribution on $V$ invariant,
\item $a^{ij}=0$ whenever $(i,j)\in[N]^{2}\setminus V^{2}$. 
\end{enumerate}
Note that when $V=[N]$, we have $\mathbb{A}_{V}\equiv\mathbb{A}_{[N]}$
as defined in Section~\ref{sec:asmc}. By convention, when $V=\emptyset$,
we define $\mathbb{A}_{V}$ to contain only the zero matrix. It is
readily observed that if $a\in\mathbb{A}_{V}$ and $a'\in\mathbb{A}_{V'}$
with $V\cap V'=\emptyset$ then $(a+a')\in\mathbb{A}_{V\cup V'}$.

Now let $\mathbb{A}:=\bigcup_{V\subseteq[N]}\mathbb{A}_{V}$ and define
the function $N^{{\rm eff}}:\mathbb{A}\times\mathbb{R}_{+}^{N}\rightarrow\mathbb{R}_{+}$,
\begin{equation}
N^{{\rm eff}}(a,c):=\begin{cases}
0 & \; a\in\mathbb{A}_{\emptyset},\\
\frac{\left(\sum_{i}\sum_{j}a^{ij}c^{j}\right)^{2}}{\sum_{i}\left(\sum_{j}a^{ij}c^{j}\right)^{2}} & \;\text{otherwise},
\end{cases}\label{eq:ESS}
\end{equation}
where $c=(c^{1},\ldots,c^{N})\in\mathbb{R}_{+}^{N}$ (for simplicity
we shall always assume that each $c^{i}$ is strictly positive). This
generalizes the ESS in (\ref{eq:ESS_defn_front}): let $c$ be given
by $c^{i}:=W_{n-1}^{i}g_{n-1}(\zeta_{n-1}^{i}),\: i\in[N]$. Then,
if $\alpha_{n-1}=a\in\mathbb{A}_{[N]}$, we have $N^{{\rm eff}}(\alpha_{n-1},c)\equiv N_{n}^{{\rm eff}}$.
If instead $\alpha_{n-1}=a+a'$, where $a\in\mathbb{A}_{V}$ for some
strict, non-empty subset $V\subset[N]$ and $a'\in\mathbb{A}_{[N]\setminus V}$,
then 
\[
N^{{\rm eff}}(a,c)=\frac{\left(\sum_{i\in V}\sum_{j\in V}a^{ij}c^{j}\right)^{2}}{\sum_{i\in V}\left(\sum_{j\in V}a^{ij}c^{j}\right)^{2}}=\frac{\left(\sum_{i\in V}W_{n}^{i}\right)^{2}}{\sum_{i\in V}\left(W_{n}^{i}\right)^{2}}
\]
represents the ESS associated with the sub-population of $\left|V\right|$
weights $\left\{ W_{n}^{i}:i\in V\right\} $, cf. (\ref{eq:ESS_defn_front}). 

The following proposition provides useful properties of $N^{{\rm eff}}$.
\begin{prop}
\label{prop:ess_properties}Let $V,V'\subseteq[N]$ such that $V\cap V'=\emptyset$.
Let $a\in\mathbb{A}_{V}$, $a',\tilde{a}'\in\mathbb{A}_{V'}$ and
$c\in\mathbb{R}_{+}^{N}$ be given such that $N^{{\rm eff}}\left(a,c\right)$,
$N^{{\rm eff}}\left(a',c\right)$ and $N^{{\rm eff}}\left(\tilde{a}',c\right)$
are all positive. All of the following hold:\end{prop}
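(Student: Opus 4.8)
The plan is to reduce every assertion to elementary inequalities for nonnegative reals, after first cashing in the structure of $\mathbb{A}_V$. For $a\in\mathbb{A}_V$ write $w^i:=\sum_j a^{ij}c^j$; property~2 forces $w^i=0$ for $i\notin V$, and properties~1--2 together are equivalent to the column-sum identity $\sum_i a^{ij}=\mathbf{1}[j\in V]$. The immediate consequence is mass preservation,
\begin{equation*}
\sum_i w^i=\sum_{j\in V}c^j=:S,
\end{equation*}
a quantity depending on $V$ and $c$ but \emph{not} on the particular $a\in\mathbb{A}_V$. Setting $Q:=\sum_{i\in V}(w^i)^2$ gives $N^{\mathrm{eff}}(a,c)=S^2/Q$, and the analogous definitions for $a'$ and $\tilde a'$ produce $S'$ (invariant across $\mathbb{A}_{V'}$, and strictly positive since $c^i>0$ and $V'\neq\emptyset$) together with $Q'$ and $\tilde Q'$. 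I expect this invariance of $S$ and $S'$ to be the real content; once it is in hand the remaining steps are bookkeeping.

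Next I would invoke the already-noted fact that $a+a'\in\mathbb{A}_{V\cup V'}$ and use $V\cap V'=\emptyset$ to read off the block structure: the weight vector of $a+a'$ equals $w^i$ on $V$ and $w'^i$ on $V'$, whence
\begin{equation*}
N^{\mathrm{eff}}(a+a',c)=\frac{(S+S')^2}{Q+Q'}.
\end{equation*}
From here each claim is a one-line inequality. The bound $1\le N^{\mathrm{eff}}(a,c)\le|V|$ follows from $\sum_{i\in V}(w^i)^2\le(\sum_{i\in V}w^i)^2$ (nonnegative cross terms) and Cauchy--Schwarz $(\sum_{i\in V}w^i)^2\le|V|\sum_{i\in V}(w^i)^2$. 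For the two-block statements put $n_1:=N^{\mathrm{eff}}(a,c)=S^2/Q$, $n_2:=N^{\mathrm{eff}}(a',c)=S'^2/Q'$ and $p:=S/(S+S')\in(0,1)$, so that
\begin{equation*}
N^{\mathrm{eff}}(a+a',c)=\Bigl(\frac{p^2}{n_1}+\frac{(1-p)^2}{n_2}\Bigr)^{-1}.
\end{equation*}
The upper bound $N^{\mathrm{eff}}(a+a',c)\le n_1+n_2$ is then the Engel form of Cauchy--Schwarz, $p^2/n_1+(1-p)^2/n_2\ge(p+1-p)^2/(n_1+n_2)=1/(n_1+n_2)$, and the lower bound $N^{\mathrm{eff}}(a+a',c)\ge\min\{n_1,n_2\}$ follows from $p^2/n_1+(1-p)^2/n_2\le(p^2+(1-p)^2)/\min\{n_1,n_2\}\le1/\min\{n_1,n_2\}$.

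Finally, the comparison involving $\tilde a'$ falls out of the same representation. Replacing $a'$ by $\tilde a'\in\mathbb{A}_{V'}$ leaves $S$, $S'$ and $Q$ unchanged, so both $Q'\mapsto(S+S')^2/(Q+Q')$ and $Q'\mapsto S'^2/Q'$ are strictly decreasing; hence $N^{\mathrm{eff}}(a',c)\le N^{\mathrm{eff}}(\tilde a',c)$ (equivalently $Q'\ge\tilde Q'$) yields $N^{\mathrm{eff}}(a+a',c)\le N^{\mathrm{eff}}(a+\tilde a',c)$, with equality iff $Q'=\tilde Q'$. The main obstacle is conceptual rather than computational: correctly translating property~1 into the column-sum identity, and thereby the invariance of $S$ and $S'$ across $\mathbb{A}_V$ and $\mathbb{A}_{V'}$, since this is exactly what decouples the two blocks and lets the monotonicity hold with $a$ held fixed. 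Any equality cases that are required read off directly from the equality conditions in the Cauchy--Schwarz steps above.
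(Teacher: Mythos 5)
Your proposal is correct and follows the same overall route as the paper's proof: both derive the column-sum identity $\sum_{i}a^{ij}=\mathbb{I}(j\in V)$ from the definition of $\mathbb{A}_V$, reduce everything to block quantities (your $S,S',Q,Q',\tilde Q'$ are exactly the paper's $x_1,x_2,y_1,y_2,\tilde y_2$), and exploit the fact that the masses $S,S'$ are invariant across $\mathbb{A}_V$ and $\mathbb{A}_{V'}$ while only the sums of squares move; your part 3 argument is literally the paper's. Where you differ is the packaging of parts 2 and 4: you rewrite $N^{\mathrm{eff}}(a+a',c)=\bigl(p^2/n_1+(1-p)^2/n_2\bigr)^{-1}$ with $p=S/(S+S')$ and invoke the Engel form of Cauchy--Schwarz for subadditivity and $p^2+(1-p)^2\le 1$ for the lower bound, whereas the paper proves subadditivity via the exact identity $n_1+n_2-N^{\mathrm{eff}}(a+a',c)=\frac{y_1y_2}{y_1+y_2}\bigl(\frac{x_1}{y_1}-\frac{x_2}{y_2}\bigr)^2$ and the lower bound by writing $N^{\mathrm{eff}}(a+a',c)$ as a convex combination of $n_1,n_2$ plus a nonnegative cross term. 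Your versions are slightly slicker one-liners; the paper's identity has the advantage of quantifying the subadditivity gap. Your deferred equality cases do read off correctly: Engel equality is $p/n_1=(1-p)/n_2$, i.e.\ $S/n_1=S'/n_2$, which is precisely the stated condition. Two loose ends you should tidy: (i) part 1 also asserts that $N^{\mathrm{eff}}(a,c)=|V|$ when $a^{ij}=|V|^{-1}\mathbb{I}(i,j\in V)$, which your write-up never checks (it is immediate: then every $w^i$, $i\in V$, equals $S/|V|$, so $Q=S^2/|V|$); and (ii) the claim that properties 1--2 are \emph{equivalent} to the column-sum identity is an overstatement---the identity alone does not force the rows indexed by $i\notin V$ to vanish---but only the forward implication is used, so the proof is unaffected.
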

\begin{enumerate}
\item Extremes: $1\leq N^{{\rm eff}}(a,c)\leq|V|$ and $N^{{\rm eff}}(a,c)=|V|$
whenever $a^{ij}=|V|^{-1}\mathbb{I}\left(i,j\in V\right)$.
\item Subadditivity: 
\[
N^{{\rm eff}}\left(a+a',c\right)\leq N^{{\rm eff}}\left(a,c\right)+N^{{\rm eff}}\left(a',c\right),
\]
with equality only when $\frac{\sum_{j\in V}c^{j}}{N^{{\rm eff}}(a,c)}=\frac{\sum_{j\in V'}c^{j}}{N^{{\rm eff}}(a',c)}$.
\item Monotonicity:
\begin{multline*}
N^{{\rm eff}}(a',c)\leq N^{{\rm eff}}(\tilde{a}',c)\\
\Longrightarrow N^{{\rm eff}}(a+a',c)\leq N^{{\rm eff}}(a+\tilde{a}',c),
\end{multline*}
with equality on the right hand side of the implication only when
$N^{{\rm eff}}(a',c)=N^{{\rm eff}}(\tilde{a}',c)$. 
\item Lower bound: 
\[
N^{{\rm eff}}\left(a+a',c\right)\geq\min\left\{ N^{{\rm eff}}\left(a,c\right),N^{{\rm eff}}\left(a',c\right)\right\} .
\]

\end{enumerate}
The first part of this proposition is well known and identifies extremal
values of $N^{{\rm eff}}$, of which the maximal value can always
be realized by a particular choice of $a$. The other parts concern
properties of $N^{{\rm eff}}$ when considering elements of $\mathbb{A}_{V}$
and $\mathbb{A}_{V'}$ for some fixed and disjoint $V,V'\subseteq[N]$.
The second establishes the subadditivity of $N^{{\rm eff}}$ and indicates
that the effective sample size associated with $a+a'$ is less than
the sum of those associated with $a$ and $a'$ separately. The third
shows that nevertheless a monotonicity property holds when comparing
two substochastic matrices in $\mathbb{A}_{V}$, and the fourth provides
a simply-proved but tight lower bound on the effective sample size
associated with $a+a'$.

\subsection{Disjoint unions of complete graphs and a partial order\label{sub:Graphs,-induced-substochastic}}

\citet[ Section 5.3]{whiteley2013role} considered a family of candidate
$\alpha$ matrices which have the interpretation of being transition
matrices of random walks on regular undirected graphs. In this section,
we expand upon this duality between $\alpha$ matrices and undirected
graphs, and introduce some mathematical machinery which allows us
describe how these objects are related to each other, and $N^{{\rm eff}}$.
In particular, we consider graphs that are disjoint unions of complete
graphs (Definition~\ref{def:(Disjoint-union-of} below): these graphs
are not necessarily regular but are highly structured nonetheless
and are of interest here because we can define a partial order over
them, and then establish a partial order preservation result for $N^{{\rm eff}}$
(see Propositions~\ref{prop:partial_order} and~\ref{prop:order_preserving})
that will ultimately guide the efficient exploration of progressively
denser stochastic matrices until one is found, $a$, for which we
can guarantee $N^{{\rm eff}}(a,c)\geq N\tau$.

To proceed, let us introduce some standard graph-theoretic notions.
A graph $G=(V,E)$ is a set of vertices $V\subseteq[N]$ and a set
of edges $E\subseteq V^{2}$, where an edge $(i,j)\in E$ represents
a connection between vertices $i$ and $j$. We adopt the convention
that $(i,i)\in E$ whenever $i\in V$. If $G$ is undirected then
$(i,j)\in E\iff(j,i)\in E$. If $G$ is a complete graph then $E=V^{2}$.
Since a complete graph is defined solely by its vertex set, and because
complete graphs are important building blocks in the sequel, we define
$\kappa(V):=(V,V^{2})$ to be the complete graph with vertices $V$.

Let $\cup$ denote the disjoint union of two graphs: if $G=(V,E)$,
$G'=(V',E')$ and $V\cap V'=\emptyset$ then $G\cup G'=(V\cup V',E\cup E')$.

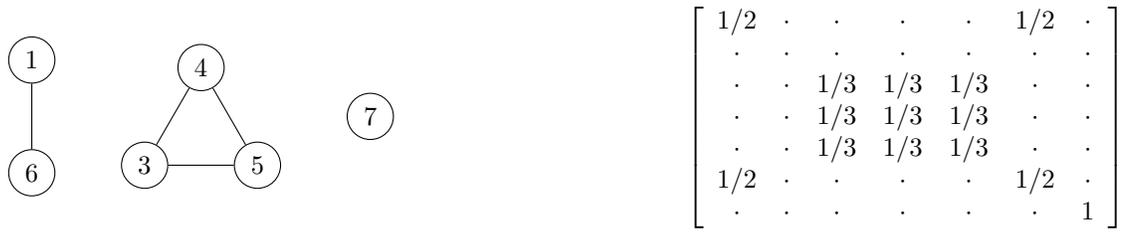
\begin{figure*}
\center
\begin{tabular}{cc}
\mbox{
\begin{minipage}{.55\textwidth}
\begin{center}
\begin{tikzpicture}
\def \s {1.5}
\node[draw, circle] at (\s*1,\s*.5)(v2) {$1$};
\node[draw, circle] at (\s*1,\s*-.5)(v3) {$6$};
\node[draw, circle] at (\s*2,\s*-.5*0.8660254)(v4) {$3$};
\node[draw, circle] at (\s*2.5,\s*.5*0.8660254)(v5) {$4$};
\node[draw, circle] at (\s*3,\s*-.5*0.8660254)(v6) {$5$};
\node[draw, circle] at (\s*4,\s*0)(v7) {$7$};
\path[every node/.style={font=\sffamily\small}]
    (v2) edge node [left] {} (v3)
    (v4) edge node [left] {} (v5)
    (v5) edge node [left] {} (v6)
    (v4) edge node [left] {} (v6)
    ;
\end{tikzpicture}
 \end{center}
\end{minipage}}
&  \mbox{
\begin{minipage}{.4\textwidth}
\begin{equation*}
\left[\begin{array}{ccccccc}
1/2&\cdot&\cdot&\cdot&\cdot&1/2&\cdot\\
\cdot&\cdot&\cdot&\cdot&\cdot&\cdot&\cdot\\ \cdot&\cdot&1/3&{1}/{3}&1/3&\cdot&\cdot\\
\cdot&\cdot&1/3&1/3&1/3&\cdot&\cdot\\
\cdot&\cdot&1/3&1/3&1/3&\cdot&\cdot\\
1/2&\cdot&\cdot&\cdot&\cdot&1/2&\cdot\\
\cdot&\cdot&\cdot&\cdot&\cdot&\cdot&1\\
\end{array}\right]
\end{equation*}
\end{minipage}}
\end{tabular}
\caption{A graph $G\in\mathbb{G}$, with vertex set $\{1,3,4,5,6,7\}$, and the corresponding matrix $\phi(G)$. For visual clarity, self-loops are not shown.}
\label{fig:complete_graphs_mtx}
\end{figure*}

\begin{figure*}
\center
\begin{tabular}{c@{}c@{}c}
\mbox{
\begin{minipage}{.5\textwidth}
\begin{center}
\begin{tikzpicture}
\def \s {1.5}
\node[draw, circle] at (\s*1,\s*.5)(v2) {$1$};
\node[draw, circle] at (\s*1,\s*-.5)(v3) {$6$};
\node[draw, circle] at (\s*2,\s*-.5*0.8660254)(v4) {$3$};
\node[draw, circle] at (\s*2.5,\s*.5*0.8660254)(v5) {$4$};
\node[draw, circle] at (\s*3,\s*-.5*0.8660254)(v6) {$5$};
\node[draw, circle] at (\s*4,\s*0)(v7) {$7$};
\path[every node/.style={font=\sffamily\small}]
    (v2) edge node [left] {} (v3)
    (v4) edge node [left] {} (v5)
    (v5) edge node [left] {} (v6)
    (v4) edge node [left] {} (v6)
    ;
\end{tikzpicture}
\end{center}
\end{minipage}}
& {\huge$\preceq$} &
\mbox{
\begin{minipage}{.38\textwidth}
\begin{center}
\begin{tikzpicture}
\def \s {1.5}
\node[draw, circle] at (\s*1,\s*.5)(v2) {$1$};
\node[draw, circle] at (\s*1,\s*-.5)(v3) {$6$};
\node[draw, circle] at (\s*2,\s*-.5)(v4) {$3$};
\node[draw, circle] at (\s*2,\s*.5)(v5) {$4$};
\node[draw, circle] at (\s*3,\s*-.5)(v6) {$5$};
\node[draw, circle] at (\s*3,\s*.5)(v7) {$7$};
\path[every node/.style={font=\sffamily\small}]
    (v2) edge node [left] {} (v3)
    (v4) edge node [left] {} (v5)
    (v5) edge node [left] {} (v6)
    (v4) edge node [left] {} (v6)
    (v7) edge node [left] {} (v4)
    (v7) edge node [left] {} (v5)
    (v7) edge node [left] {} (v6)
    ;
\end{tikzpicture}
 \end{center}
\end{minipage}}
\end{tabular}
\caption{Two graphs $G,G^\prime\in\mathbb{G}$ with $G\preceq G^\prime$.}
\label{fig:complete_graphs_order}
\end{figure*}
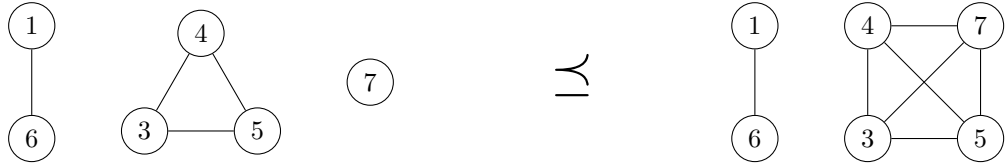
\begin{defn}
\label{def:(Disjoint-union-of} (Disjoint union of complete graphs)
A graph $G$ is a disjoint union of complete graphs if for some $K\in[N]$
there exists a set of pairwise disjoint subsets of $[N]$, denoted$\{V_{k}:k\in[K]\}$
such that $G=\bigcup_{k\in[K]}\kappa(V_{k})$.
\end{defn}
In analogy with $\mathbb{A}_{V}$ (and $\mathbb{A}$) we define $\mathbb{G}_{V}$
to be the set of graphs which have vertices $V$ and which are disjoint
unions of complete graphs (and $\mathbb{G}:=\bigcup_{V\subseteq[N]}\mathbb{G}_{V}$).
Clearly, if $G\in\mathbb{G}_{V}$, $G'\in\mathbb{G}_{V'}$ and $V\cap V'=\emptyset$,
then $G\cup G'\in\mathbb{G}_{V\cup V'}$. We also define the matrix-valued
function $\phi$ 

\[
\phi:G=(V,E)\in\mathbb{G}\longmapsto a=(a^{ij})\in\mathbb{A}
\]
where
\begin{equation}
a^{ij}:=\begin{cases}
\frac{\mathbb{I}\left\{ (i,j)\in E\right\} }{\sum_{k\in[N]}\mathbb{I}\left\{ (i,k)\in E\right\} } & \;(i,j)\in V^{2},\\
0 & \;(i,j)\in[N]^{2}\setminus V^{2}.
\end{cases}\label{eq:phi_def}
\end{equation}
One trivial property of elements $G=(V,E)\in\mathbb{G}$ is that $(i,j)\in E$
and $(j,k)\in E$ implies $(i,k)\in E$. It is therefore clear that
if $G\in\mathbb{G}_{V}$ then $\phi(G)$ is a symmetric matrix and
leaves the uniform distribution on $V$ invariant, hence, $\phi(G)\in\mathbb{A}_{V}$.
Figure~\ref{fig:complete_graphs_mtx} shows an example of a graph
$G\in\mathbb{G}$ and the corresponding substochastic matrix $\phi(G)$.
Letting $\mathbb{A}_{\mathbb{G}}:=\{\phi(G):G\in\mathbb{G}\}$ be
the image of $\phi$, it is straightforward that $\phi:\mathbb{G}\rightarrow\mathbb{A}_{\mathbb{G}}$
is a bijection and so we denote by $\phi^{-1}$ the inverse of $\phi$.
In addition, it can be seen that if $G\in\mathbb{G}_{V}$ and $G'\in\mathbb{G}_{V'}$
with $V\cap V'=\emptyset$, then 
\begin{equation}
\phi\left(G\cup G'\right)=\phi(G)+\phi(G').\label{eq:phi_union_addition}
\end{equation}
 We can now introduce a particular relation amongst graphs, and amongst
the corresponding substochastic matrices.
\begin{defn}
\label{def:binary_reln} (Binary relation $\preceq$) Let $G=(V,E)$
and $\tilde{G}=(\tilde{V},\tilde{E})$ be members of $\mathbb{G}$.
Then we write $G\preceq\tilde{G}$ if and only if $V=\tilde{V}$ and
$E\subseteq\tilde{E}$. Since $\phi$ is a bijection between $\mathbb{G}$
and $\mathbb{A}_{\mathbb{G}}$ we will also write, for $a,\tilde{a}\in\mathbb{A}_{\mathbb{G}}$,
$a\preceq\tilde{a}$ if and only $\phi^{-1}(a)\preceq\phi^{-1}(\tilde{a})$.\end{defn}
\begin{prop}
\label{prop:partial_order} (Partial order) $\preceq$ is a partial
order over $\mathbb{G}$ and $\mathbb{A}_{\mathbb{G}}$. 
\end{prop}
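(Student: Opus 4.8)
The plan is to verify directly the three defining properties of a partial order — reflexivity, antisymmetry, and transitivity — first for $\preceq$ on $\mathbb{G}$, and then to transport them to $\mathbb{A}_{\mathbb{G}}$ through the bijection $\phi$. Since the relation $G \preceq \tilde{G}$ on $\mathbb{G}$ is defined (Definition~\ref{def:binary_reln}) as the conjunction of the vertex equality $V = \tilde{V}$ and the edge inclusion $E \subseteq \tilde{E}$, each property should reduce to the corresponding property of $=$ and $\subseteq$ on sets, both of which are standard.

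Concretely, for reflexivity I would note that $V = V$ and $E \subseteq E$ hold trivially, so $G \preceq G$ for every $G \in \mathbb{G}$. For antisymmetry I would assume $G \preceq \tilde{G}$ and $\tilde{G} \preceq G$: the vertex conditions give $V = \tilde{V}$, while the edge conditions give both $E \subseteq \tilde{E}$ and $\tilde{E} \subseteq E$, whence $E = \tilde{E}$ by the antisymmetry of set inclusion, so $G = \tilde{G}$. For transitivity I would assume $G \preceq \tilde{G}$ and $\tilde{G} \preceq \hat{G}$; the chain of vertex equalities yields $V = \hat{V}$ and the chain of edge inclusions yields $E \subseteq \hat{E}$, so $G \preceq \hat{G}$.

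The statement for $\mathbb{A}_{\mathbb{G}}$ then follows immediately from the fact, established just before Definition~\ref{def:binary_reln}, that $\phi : \mathbb{G} \to \mathbb{A}_{\mathbb{G}}$ is a bijection. The relation on $\mathbb{A}_{\mathbb{G}}$ is \emph{defined} as the pullback of $\preceq$ through $\phi^{-1}$, namely $a \preceq \tilde{a} \iff \phi^{-1}(a) \preceq \phi^{-1}(\tilde{a})$, and the pullback of a partial order through a bijection is again a partial order. Reflexivity and transitivity transfer verbatim, and antisymmetry transfers because $\phi^{-1}$ is injective, so $\phi^{-1}(a) = \phi^{-1}(\tilde{a})$ forces $a = \tilde{a}$.

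I expect no substantive obstacle: the argument is pure bookkeeping against the set-theoretic properties of $=$ and $\subseteq$. The only point deserving a moment's care is that the relation on $\mathbb{A}_{\mathbb{G}}$ is \emph{well-defined}, i.e.\ that $\phi^{-1}(a)$ exists for every $a \in \mathbb{A}_{\mathbb{G}}$; this is exactly what the bijectivity of $\phi$ guarantees, and I would flag this dependence explicitly but expect it to occupy no more than a sentence.
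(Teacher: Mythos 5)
Your proposal is correct and follows essentially the same route as the paper's proof: verify reflexivity, antisymmetry, and transitivity on $\mathbb{G}$ by reducing each to the corresponding property of $=$ and $\subseteq$, then transfer to $\mathbb{A}_{\mathbb{G}}$ via the bijection $\phi$. Your treatment is in fact slightly more explicit than the paper's, which handles the transfer to $\mathbb{A}_{\mathbb{G}}$ with a single remark, whereas you spell out why the pullback of a partial order through a bijection remains a partial order.
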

Definition~\ref{def:binary_reln} says that for some $G,\tilde{G}\in\mathbb{G}_{V}$
we have $G\preceq\tilde{G}$ if $\tilde{G}=G$, or if $\tilde{G}$
can be obtained from $G$ by adding edges in such a way that $\tilde{G}\in\mathbb{G}_{V}$.
Intuitively, one can imagine adding edges by choosing two of the complete
graphs comprising $G$ and adding edges between all vertices in these
two graphs. Figure~\ref{fig:complete_graphs_order} shows an example
of two graphs $G,G^{\prime}\in\mathbb{G}$ such that $G\preceq G'$.
We note that $\preceq$ is not a total order, because there exist
members of $\mathbb{G}_{V}$, $G=(V,E)$, $G'=(V,E'$) such that $E\nsubseteq E'$
and $E'\nsubseteq E$. Our interest in $\preceq$ is the following
order preservation property. 
\begin{prop}
\label{prop:order_preserving} (Order preservation) For any $c\in\mathbb{R}_{+}^{N}$,
$a\preceq\tilde{a}\implies N^{{\rm eff}}(a,c)\leq N^{{\rm eff}}(\tilde{a},c)$.
\end{prop}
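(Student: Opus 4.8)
The plan is to reduce the general statement to a single ``elementary merge'' of two complete-graph components, prove order preservation across one such merge, and then chain the merges together by transitivity of $\leq$. Since $a\preceq\tilde{a}$ means, by Definition~\ref{def:binary_reln}, that $G:=\phi^{-1}(a)=(V,E)$ and $\tilde{G}:=\phi^{-1}(\tilde{a})=(V,\tilde{E})$ satisfy $E\subseteq\tilde{E}$, and since both edge sets are disjoint unions of complete graphs (equivalently, equivalence relations on $V$), the first thing I would record is a structural fact: the inclusion $E\subseteq\tilde{E}$ forces the partition of $V$ induced by $G$ to refine the partition induced by $\tilde{G}$. Writing $G=\bigcup_k\kappa(V_k)$ and $\tilde{G}=\bigcup_l\kappa(\tilde{V}_l)$, this says that each block $\tilde{V}_l$ is a disjoint union of some of the $V_k$. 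Consequently $\tilde{G}$ can be reached from $G$ by a finite sequence of pairwise merges, where each step selects two complete components $\kappa(V_1),\kappa(V_2)$ of the current graph and replaces them by the single complete graph $\kappa(V_1\cup V_2)$, with every intermediate graph remaining in $\mathbb{G}_V$.

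Given this reduction, it suffices to establish the inequality across one elementary merge. For a single merge I would set $W:=V_1\cup V_2$ and let $U:=V\setminus W$ collect the vertices of the unaffected components, writing $b$ for $\phi$ applied to the union of those components, so that $b\in\mathbb{A}_U$. Using additivity~(\ref{eq:phi_union_addition}), the pre-merge graph corresponds to $a=b+(a_1+a_2)$, with $a_1:=\phi(\kappa(V_1))\in\mathbb{A}_{V_1}$ and $a_2:=\phi(\kappa(V_2))\in\mathbb{A}_{V_2}$, while the post-merge graph corresponds to $\tilde{a}=b+\phi(\kappa(W))$.

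The key inequality at the level of the merged block then follows immediately from the first part of Proposition~\ref{prop:ess_properties}. Since $a_1+a_2\in\mathbb{A}_W$, the extremal bound gives $N^{{\rm eff}}(a_1+a_2,c)\leq|W|$; on the other hand $\phi(\kappa(W))$ has entries $|W|^{-1}\mathbb{I}(i,j\in W)$ and therefore attains the maximal value $N^{{\rm eff}}(\phi(\kappa(W)),c)=|W|$. Hence $N^{{\rm eff}}(a_1+a_2,c)\leq N^{{\rm eff}}(\phi(\kappa(W)),c)$. I would then feed this into the monotonicity (third) part of Proposition~\ref{prop:ess_properties}, taking the fixed summand to be $b\in\mathbb{A}_U$ and the two compared matrices to be $a_1+a_2$ and $\phi(\kappa(W))$, both in $\mathbb{A}_W$ with $U\cap W=\emptyset$. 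This yields exactly $N^{{\rm eff}}(a,c)\leq N^{{\rm eff}}(\tilde{a},c)$ for the single merge, and transitivity over the merge sequence gives the proposition.

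I do not expect a deep obstacle here; the statement is essentially a repackaging of the extremal and monotonicity parts of Proposition~\ref{prop:ess_properties}. The points that genuinely need care are the bookkeeping of the refinement-and-sequential-merge reduction, and the positivity hypotheses of Proposition~\ref{prop:ess_properties}. The latter hold automatically when $c$ has strictly positive entries and all the vertex sets involved are non-empty; the only genuinely degenerate case is $U=\emptyset$, where $b$ is the zero matrix, so that $a=a_1+a_2$ and $\tilde{a}=\phi(\kappa(W))$ and the desired inequality reduces to the block-level inequality already established, bypassing the monotonicity step altogether.
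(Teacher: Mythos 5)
Your proposal is correct and takes essentially the same route as the paper's proof: both exploit the fact that $\preceq$ corresponds to partition refinement to write the passage from $a$ to $\tilde{a}$ as a finite chain of merges within $\mathbb{G}_V$, each single merge being handled by combining the extremal bound (part 1) and the monotonicity (part 3) of Proposition~\ref{prop:ess_properties}. The only differences are bookkeeping---the paper collapses all fine blocks inside each coarse block $\tilde{V}_i$ in one step rather than merging two components at a time---and your explicit treatment of the degenerate case $U=\emptyset$ (where the fixed summand is the zero matrix and monotonicity cannot be invoked) is a point the paper's proof glosses over.
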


\subsection{Local lower bounds on $N^{{\rm eff}}$\label{sub:Local-lower-bounds}}

In this subsection we present Algorithm~\ref{alg:Choosea}, a recursive
method for efficient selection of $a\in\mathbb{A}_{[N]}$; using a
corresponding recursive lower bound on $N^{{\rm eff}}$ (Proposition~\ref{prop:alpha_lower_bound})
and the ordering result Proposition~\ref{prop:order_preserving},
we shall validate Algorithm~\ref{alg:Choosea} with Proposition~\ref{prop:recursiveboundinduction},
which shows that it is guaranteed to achieve $N^{{\rm eff}}(a,c)\geq N\tau$. 

For purposes of exposition, we first provide an expression for $N^{{\rm eff}}(\tilde{a},c)$
when $\tilde{a}$ is the substochastic matrix associated with a disjoint
union of complete graphs. Following (\ref{eq:phi_union_addition}),
let $\tilde{a}=\sum_{k\in[K]}\phi(\kappa(V_{k}))\in\mathbb{A}_{\mathbb{G}}\cap\mathbb{A}_{V}$
for some $K\in[N]$ and pairwise disjoint $\{V_{k}:k\in[K]\}$ with
$V=\bigcup_{k\in[K]}V_{k}$. Then, from (\ref{eq:ESS}),
\begin{eqnarray}
N^{{\rm eff}}\left(\tilde{a},c\right) & = & \frac{\left(\sum_{k\in[K]}\sum_{j\in V_{k}}c^{j}\right)^{2}}{\sum_{k\in[K]}\sum_{i\in V_{k}}\left(\sum_{j\in V_{k}}\phi(\kappa(V_{k}))^{ij}c^{j}\right)^{2}}\nonumber \\
 & = & \frac{\left(\sum_{k\in[K]}\sum_{j\in V_{k}}c^{j}\right)^{2}}{\sum_{k\in[K]}|V_{k}|\left(|V_{k}|^{-1}\sum_{j\in V_{k}}c^{j}\right)^{2}}.\label{eq:ess_union_complete}
\end{eqnarray}
We note that (\ref{eq:ess_union_complete}) depends on $c$ only through
the values of the sums $\left\{ \sum_{j\in V_{k}}c^{j}:k\in[K]\right\} $;
we can interpret this as saying that $N^{{\rm eff}}\left(\tilde{a},c\right)$
is equal to the ESS associated with a collection of $\sum_{k\in[K]}|V_{k}|$
weights, in which for each $k\in[K]$ there are $|V_{k}|$ weights
all taking the value $|V_{k}|^{-1}\sum_{j\in V_{k}}c^{j}$. This lends
interpretation to the lower bound in the following proposition. 
\begin{prop}
\label{prop:alpha_lower_bound}Let $\{V_{k}:k\in[K]\}$ consist of
non-empty and pairwise disjoint subsets of $[N]$ and $\{a_{k}:k\in[K]\}$
be given such that each $a_{k}\in\mathbb{A}_{V_{k}}$. Let $a=\sum_{k\in[K]}a_{k}$
and $\tilde{a}=\sum_{k\in[K]}\phi(\kappa(V_{k}))$. Then for any $c\in\mathbb{R}_{+}^{N}$,
\begin{equation}
N^{{\rm eff}}\left(\tilde{a},c\right)\geq N^{{\rm eff}}\left(a,c\right)\geq\min_{k}\left\{ \frac{N^{{\rm eff}}(a_{k},c)}{|V_{k}|}\right\} N^{{\rm eff}}\left(\tilde{a},c\right).\label{eq:N^eff_lower_bound_recursion}
\end{equation}

\end{prop}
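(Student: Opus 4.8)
The plan is to reduce both inequalities to term-by-term comparisons between the $K$ blocks indexed by the $V_k$, exploiting the block-diagonal structure of $a=\sum_k a_k$. First I would introduce the block aggregates $S_k:=\sum_{j\in V_k}c^j$ and the per-particle quantities $w_k^i:=\sum_{j\in V_k}a_k^{ij}c^j$ for $i\in V_k$, together with $D_k:=\sum_{i\in V_k}(w_k^i)^2$. The crucial preliminary observation is that membership $a_k\in\mathbb{A}_{V_k}$ forces $a_k$ to be doubly stochastic when restricted to $V_k\times V_k$: substochasticity gives row sums at most $1$, invariance of the uniform distribution on $V_k$ gives column sums exactly $1$, and a counting argument then pins every row sum to $1$ as well. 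Consequently $\sum_{i\in V_k}w_k^i=\sum_{j\in V_k}c^j\sum_{i\in V_k}a_k^{ij}=S_k$, so that summing the $w_k^i$ recovers the same block mass $S_k$ irrespective of the choice of $a_k$.

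With this in hand, disjointness of the $V_k$ lets me expand \eqref{eq:ESS} for $a=\sum_k a_k$ into a sum over blocks with no cross terms, yielding $N^{{\rm eff}}(a,c)=(\sum_k S_k)^2/D_a$ with $D_a=\sum_k D_k$, while \eqref{eq:ess_union_complete} already gives $N^{{\rm eff}}(\tilde a,c)=(\sum_k S_k)^2/D_{\tilde a}$ with $D_{\tilde a}=\sum_k S_k^2/|V_k|$. Since the two numerators coincide and are strictly positive (each $c^i>0$ and each $V_k$ is non-empty), both inequalities become statements about the denominators alone: the left inequality $N^{{\rm eff}}(\tilde a,c)\geq N^{{\rm eff}}(a,c)$ is equivalent to $D_{\tilde a}\leq D_a$, and the right inequality is equivalent to $m\,D_a\leq D_{\tilde a}$, where $m:=\min_k\{N^{{\rm eff}}(a_k,c)/|V_k|\}$.

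For the left inequality I would argue blockwise: by Cauchy--Schwarz, $D_k=\sum_{i\in V_k}(w_k^i)^2\geq(\sum_{i\in V_k}w_k^i)^2/|V_k|=S_k^2/|V_k|$, which is precisely the extremal bound $N^{{\rm eff}}(a_k,c)\leq|V_k|$ of Proposition~\ref{prop:ess_properties}(1) rewritten; summing over $k$ gives $D_a\geq D_{\tilde a}$. For the right inequality I would use the identity $D_k=S_k^2/N^{{\rm eff}}(a_k,c)$ together with the definition of $m$, which gives $m/N^{{\rm eff}}(a_k,c)\leq 1/|V_k|$ for every $k$; multiplying by $S_k^2$ and summing yields $m\,D_a=\sum_k m\,S_k^2/N^{{\rm eff}}(a_k,c)\leq\sum_k S_k^2/|V_k|=D_{\tilde a}$, as required.

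The calculation is essentially routine once this reduction is in place, so I do not expect a genuinely hard step; the one point deserving care is the preliminary claim that each $a_k$ conserves the block mass $S_k$, since this is exactly what makes the two numerators agree and collapses the problem to a comparison of denominators. Verifying that invariance-plus-substochasticity yields \emph{full} double stochasticity on the block (not merely column sums equal to $1$) via the $|V_k|\times|V_k|$ counting argument, and checking that the block-diagonal structure of $a$ truly eliminates all cross terms in \eqref{eq:ESS}, are the places where the hypotheses $a_k\in\mathbb{A}_{V_k}$ and the pairwise disjointness of the $V_k$ are actually consumed.
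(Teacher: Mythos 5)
Your proof is correct, and its engine is the same as the paper's: both arguments rest on the observation that $N^{{\rm eff}}(a,c)$ and $N^{{\rm eff}}(\tilde a,c)$ share the numerator $\left(\sum_k S_k\right)^2$ --- a consequence of the column-sum identity $\sum_{i\in V_k}a_k^{ij}=\mathbb{I}(j\in V_k)$ --- so that both inequalities reduce to blockwise comparisons of denominators. Your treatment of the second (substantive) inequality is the paper's argument verbatim up to notation: writing $D=\min_k N^{{\rm eff}}(a_k,c)/|V_k|$, the paper bounds each block denominator by $S_k^2/(D|V_k|)$ and sums over $k$, which is exactly your $m\,D_a\le D_{\tilde a}$. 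The genuine difference is the first inequality: the paper dispatches it by citing Proposition~\ref{prop:order_preserving} ``since $a\preceq\tilde a$'', whereas you prove it directly via Cauchy--Schwarz on each block, $D_k\ge S_k^2/|V_k|$. Your route is more elementary and, strictly speaking, more accurate: the relation $\preceq$ and Proposition~\ref{prop:order_preserving} are defined only on $\mathbb{A}_{\mathbb{G}}$, while $a=\sum_k a_k$ with arbitrary $a_k\in\mathbb{A}_{V_k}$ need not lie in $\mathbb{A}_{\mathbb{G}}$; the rigorous repair of the paper's one-liner is to apply parts 1 and 3 of Proposition~\ref{prop:ess_properties} block by block, which is in substance what your Cauchy--Schwarz step accomplishes. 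One minor trim: your preliminary upgrade to full double stochasticity of $a_k$ on $V_k\times V_k$ is correct but never consumed --- the mass-conservation identity $\sum_{i\in V_k}w_k^i=S_k$ needs only the column sums, i.e., exactly the invariance condition defining $\mathbb{A}_{V_k}$.
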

Importantly, Proposition~\ref{prop:alpha_lower_bound} enables us
to calculate a lower bound on $N^{{\rm eff}}\left(\sum_{k\in[K]}a_{k},c\right)$
without explicit computation of (\ref{eq:ESS}). This observation
is at the heart of our new algorithms.

A disjoint union of complete graphs with vertices $V\subseteq[N]$
can be succinctly represented by a partition $P=\{V_{k}:k\in[K]\}$
of $V$, where $K\in[\left|V\right|]$. Overloading our $N^{{\rm eff}}(\cdot,c)$
notation so as to conveniently express certain quantities in Algorithm~\ref{alg:Choosea},
we define for such a partition $P$,
\begin{equation}
N^{{\rm eff}}(P,c):=\frac{\left(\sum_{S\in P}\sum_{j\in S}c^{j}\right)^{2}}{\sum_{S\in P}|S|\left(|S|^{-1}\sum_{j\in S}c^{j}\right)^{2}}.\label{eq:Neff_partition_formulation}
\end{equation}
Since $P$ is a partition of $V$, we have 
\begin{equation}
\frac{N^{{\rm eff}}(P,c)}{\sum_{S\in P}\sum_{j\in S}1}=|V|^{-1}N^{{\rm eff}}(P,c)=:\rho(P,c),\label{eq:fp_first}
\end{equation}
and this quantity also appears in Algorithm~\ref{alg:Choosea}.

If $P$ and $\tilde{P}$ are the partitions representing $G$ and
$\tilde{G}$ respectively, where $G,\tilde{G}\in\mathbb{G}_{V}$ for
some $V\subseteq[N]$, then $G\preceq\tilde{G}$ if and only if $P$
is a refinement of $\tilde{P}$. This allows us to make the following
definition, which will be used extensively in the sequel.
\begin{defn}
\label{def:coarsening} (Coarsening) Let $P$, $\tilde{P}$ be partitions
of some subset of $[N]$. Then $\tilde{P}$ is a coarsening of $P$,
written $\tilde{P}\succeq P$, if and only if $P$ is a refinement
of $\tilde{P}$.
\end{defn}
It follows from Proposition~\ref{prop:order_preserving} that $\tilde{P}\succeq P\implies N^{{\rm eff}}(\tilde{P},c)\geq N^{{\rm eff}}(P,c)$.

\begin{algorithm}
\protect\caption{Choose an $a\in\mathbb{A}_{V}$ such that $N^{{\rm eff}}\left(\sum_{k\in[K]}a_{k},c\right)\geq\tau|V|$\label{alg:Choosea}}

\texttt{$\mathtt{choose.a}(V,\tau)$}
\begin{enumerate}
\item Choose a partition $P$ of $V$ such that $\rho(P,c)\geq\tau$.
\item If $P=\{V\}$ then return $\phi(\kappa(V))$.
\item Otherwise, return 
\[
\sum_{k\in[K]}{\tt choose.a}\left(V_{k},\tau/\rho(P,c)\right).
\]
\end{enumerate}
\end{algorithm}

\begin{prop}
\label{prop:recursiveboundinduction}Algorithm~\ref{alg:Choosea}
called with $(V,\tau)$ satisfying $\emptyset\neq V\subseteq[N]$
and $\tau\in[0,1]$ returns $a\in\mathbb{A}_{V}$ such that $N^{{\rm eff}}\left(a,c\right)\geq\tau|V|$.
\end{prop}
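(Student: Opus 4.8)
The plan is to argue by strong induction on the cardinality $|V|$, letting the recursive structure of Algorithm~\ref{alg:Choosea} drive the inductive step. First I would record two facts that make the induction well-posed. Step~1 is always executable, because the trivial partition $P=\{V\}$ has $N^{{\rm eff}}(\{V\},c)=|V|$ by (\ref{eq:Neff_partition_formulation}), hence $\rho(\{V\},c)=|V|^{-1}N^{{\rm eff}}(\{V\},c)=1$ by (\ref{eq:fp_first}), so $\rho(\{V\},c)=1\geq\tau$ for every $\tau\in[0,1]$. Moreover the recursion terminates: in step~3 each $V_k$ is a strictly smaller subset of $V$, so the depth is bounded by $|V|$, which is exactly what licenses strong induction.

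For the base case and, more generally, whenever the algorithm reaches step~2 with $P=\{V\}$, it returns $\phi(\kappa(V))$. I would check this lies in $\mathbb{A}_V$ by the remarks following (\ref{eq:phi_def}), and invoke the extremal property Proposition~\ref{prop:ess_properties}(1) to get $N^{{\rm eff}}(\phi(\kappa(V)),c)=|V|\geq\tau|V|$ since $\tau\leq1$. This settles the claim in this case with no appeal to the inductive hypothesis, and in particular covers $|V|=1$.

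In the inductive step with a nontrivial partition $P=\{V_1,\dots,V_K\}$ chosen at step~1, each $|V_k|<|V|$, and I would set $\tau':=\tau/\rho(P,c)$. Strict positivity of $c$ gives $\rho(P,c)>0$, so $\tau'$ is well-defined, and the feasibility constraint $\rho(P,c)\geq\tau$ from step~1 yields $\tau'\in[0,1]$. Hence each recursive call $\mathtt{choose.a}(V_k,\tau')$ satisfies the hypotheses of the proposition, and the inductive hypothesis delivers $a_k\in\mathbb{A}_{V_k}$ with $N^{{\rm eff}}(a_k,c)\geq\tau'|V_k|$. Summing, $a=\sum_k a_k\in\mathbb{A}_V$ follows from the disjointness of the $V_k$ and the closure property noted just after the definition of $\mathbb{A}_V$, applied $K$ times.

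The crux is then to apply Proposition~\ref{prop:alpha_lower_bound} to $a=\sum_k a_k$ and $\tilde a=\sum_k\phi(\kappa(V_k))$, whose lower bound reads $N^{{\rm eff}}(a,c)\geq\min_k\{N^{{\rm eff}}(a_k,c)/|V_k|\}\,N^{{\rm eff}}(\tilde a,c)$. By (\ref{eq:ess_union_complete})--(\ref{eq:Neff_partition_formulation}) I would identify $N^{{\rm eff}}(\tilde a,c)=N^{{\rm eff}}(P,c)=|V|\rho(P,c)$, while the inductive bounds give $N^{{\rm eff}}(a_k,c)/|V_k|\geq\tau'=\tau/\rho(P,c)$ for each $k$; substituting yields $N^{{\rm eff}}(a,c)\geq(\tau/\rho(P,c))\cdot|V|\rho(P,c)=\tau|V|$. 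The only real obstacle is engineering this cancellation: the induction must rescale the threshold by $1/\rho(P,c)$ at each level so that the factor $\rho(P,c)$ contributed by $N^{{\rm eff}}(\tilde a,c)$ cancels exactly, and one must simultaneously verify that this rescaling keeps the threshold within $[0,1]$ so the inductive hypothesis applies — which is precisely what the feasibility requirement $\rho(P,c)\geq\tau$ in step~1 guarantees.
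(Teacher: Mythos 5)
Your proposal is correct and follows essentially the same route as the paper: strong induction on $|V|$, the base/trivial-partition case handled via the extremal property of $N^{{\rm eff}}$, and the inductive step via Proposition~\ref{prop:alpha_lower_bound} with the rescaled threshold $\tau/\rho(P,c)$, yielding exactly the cancellation $N^{{\rm eff}}(a,c)\geq\bigl(\tau/\rho(P,c)\bigr)\rho(P,c)|V|=\tau|V|$. The extra details you supply (executability of step~1, termination, $\tau/\rho(P,c)\in[0,1]$, and $a\in\mathbb{A}_V$ via the disjoint-union closure) are left implicit in the paper but are consistent with its argument.
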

There are a number of ways that step 1 of Algorithm~\ref{alg:Choosea}
can be implemented. One possibility, motivated by Proposition~\ref{prop:order_preserving},
is to search through a sequence of successively coarser, candidate
partitions until the condition $\rho(P,c)\geq\tau$ is met. In Section~\ref{sec:Forest-resampling}
we provide a more detailed and practical version of this procedure
in Algorithm~\ref{alg:forest_constrained}, in which the partitions
considered arise from collections of tree data structures.

\section{Forest resampling\label{sec:Forest-resampling}}

In this section we introduce tree data structures to represent the
logical topology of a distributed computer architecture. Loosely,
these trees provide a model for how the operations involved in $\alpha$SMC
can be arranged over a network of communicating devices, each of which
has the capacity to store data and to perform basic simulation and
arithmetic tasks. In Sections~\ref{sub:distributed_arch}--\ref{sub:Trees-from-architecture}
we explain the connection between the distributed architecture and
tree data structures, and in Section~\ref{sub:graphs_induced_trees_forests}
we explain the connection between trees and forests, and the partitions,
graphs and matrices addressed in Section~\ref{sec:effective_sample_size}.
Sections~\ref{sec:Forest-resampling} and~\ref{sub:forest_selection}
describe the role of forests when implementing respectively lines
$(\ddagger)$, $(\dagger)$ and $(\star)$ of Algorithm~\ref{alg:aSMC},
and all these ingredients are brought together in Algorithm~\ref{alg:aSMCimpl},
which is an implementation of Algorithm~\ref{alg:aSMC} using trees
and forests.

\subsection{Distributed computer architecture\label{sub:distributed_arch}}

For the purposes of this paper, we are interested primarily in a setting
where there are a number of possibly heterogeneous computing devices
that can communicate via sending data over a network. Qualitatively,
the structural assumption will be that communication within a device
is far quicker than communication between devices. If there are $M$
devices, we might think each device $i\in[M]$ is capable of handling
a particle system with $N_{i}$ particles. This implies that interactions
involving the $N_{i}$ particles on device $i$ are considerably less
costly than interactions involving particles on different devices.

\begin{figure*}
\center
\begin{tikzpicture}[
	level/.style={sibling distance=60mm/#1},
	level distance=30pt,
	level 3/.style={sibling distance = 20pt},
	omit/.style={edge from parent/.style={red,very thick,draw=none}}
]
\def \boffset {20pt}
 \node[draw, circle] (v1) {$$}
 	child {
		node [circle,draw] (node1) {$$}
 		child {
			node [rectangle,draw] (D1) {$D_1$}
 			child {node [trigl] (n1) {$$} }
	 			child {node [trigl] (n2) {$$} }
	 			child[omit] {node [draw=none] (n3) {$\cdots$} }
	 			child {node [trigl] (n4) {$$} }
	 		}
	 		child {
			node [rectangle,draw] (D2) {$D_2$}
 			child {node [trigl] (n5) {$$} }
	 			child {node [trigl] (n6) {$$} }
	 			child[omit] {node [draw=none] (n7) {$\cdots$} }
	 			child {node [trigl] (n8) {$$}
 			}
	 		}
	}
 	child {
		node [circle,draw] (node2) {$$}
 		child {
			node [rectangle,draw] (D3) {$D_3$}
 			child {node [trigl] (n9) {$$} }
	 			child {node [trigl] (n10) {$$} }
	 			child[omit] {node [draw=none] (n11) {$\cdots$} }
	 			child {node [trigl] (n12) {$$} }	 		}
	 		child {
			node [rectangle,draw] (D4) {$D_4$}
 			child {node [trigl] (n13) {$$} }
	 			child {node [trigl] (n14) {$$} }
	 			child[omit] {node [draw=none] (n15) {$\cdots$} }
	 			child {node [trigl] (n16) {$$}
 			}
	 		}
	}
 ;
\path (n1);
\pgfgetlastxy{\XCoord}{\Bottom};
\node[draw=none, circle] (T1) at (\XCoord,\Bottom-\boffset) {$\nu_1$};
\path (n2);
\pgfgetlastxy{\XCoord}{\YCoord};
\node[draw=none, circle] (T2) at (\XCoord,\Bottom-\boffset) {$\nu_2$};
\path (n3); \pgfgetlastxy{\XCoord}{\YCoord};
\node[draw=none, circle] (T3) at (\XCoord,\Bottom-\boffset) {$\cdots$};
\path (n16); \pgfgetlastxy{\XCoord}{\YCoord}; \node[draw=none, circle] (T4) at (\XCoord,\Bottom-\boffset) {$\nu_N$}; \path (n15); \pgfgetlastxy{\XCoord}{\YCoord};
\node[draw=none, circle] (T5) at (\XCoord,\Bottom-\boffset) {$\cdots$};
\end{tikzpicture}
\caption{Roles of nodes.}
\label{fig:node_roles}
\end{figure*}

\subsection{Trees from architecture\label{sub:Trees-from-architecture}}

The architecture described in Section~\ref{sub:distributed_arch}
suggests the use of a particular type of data structure, a tree, to
represent possible interactions between computing devices. A tree
is a recursive data structure comprising a set of nodes with associated
values.
\begin{defn}
(Node) A node $\nu$ is an object that has a value, $\mathcal{V}(\nu)$,
and a (possibly empty) set of child nodes, $\mathcal{C}(\nu)$.
\end{defn}

\begin{defn}
(Finite tree) A (finite) tree $T$ is a finite set of nodes which
is either empty, or satisfies the following properties:
\begin{enumerate}
\item $\mathcal{C}(\nu)\subseteq T$ for every $\nu\in T$ (no node has
children outside $T$).
\item $\mathcal{C}(\nu)\cap\mathcal{C}(\nu')=\emptyset$ for any distinct
$\nu,\nu'\in T$ (no node is the child of two different nodes in $T$).
\item There exists a unique element of $T$ called the root and denoted
$\mathcal{R}(T)$, such that $\mathcal{R}(T)\notin\bigcup_{\nu\in T}\mathcal{C}(\nu)$
(a unique root node is not a child of any of the other nodes in $T$).
\end{enumerate}
\end{defn}
One can show (e.g., by contradiction) that if $T$ is a tree then
every node in $T$ other than $\mathcal{R}(T)$ is a descendant of
$\mathcal{R}(T)$, i.e., $T\setminus\{\mathcal{R}(T)\}=\mathcal{D}(\mathcal{R}(T))$
where $\mathcal{D}(\nu)$ denotes the descendants of $\nu$:
\[
\mathcal{D}(\nu):=\begin{cases}
\emptyset & \mathcal{C}(\nu)=\emptyset,\\
\mathcal{C}(\nu)\cup\left(\bigcup_{\chi\in\mathcal{C}(\nu)}\mathcal{D}(\chi)\right) & \mathcal{C}(\nu)\neq\emptyset.
\end{cases}
\]

\begin{defn}
\label{def:subtree} (Subtree) A subtree, of a tree $T$, consists
of a node in $T$, taken together with all of the descendants of that
node. In particular, for some $\nu\in T$ we call $\mathcal{S}(\nu):=\nu\cup\mathcal{D}(\nu)$
the subtree of $T$ with root $\nu$. 
\end{defn}
The definitions of a tree and its subtrees are equivalent to those
found in \citep[p. 308]{knuth2005art}, but with an emphasis on their
formulation using children. Here, trees serve as data structures in
that the value of each node is the data stored there, and data transfer
can occur between a node and its children.

It is conventional to call a node of a tree $T$ whose set of children
is empty a leaf, and the set of such nodes comprise the leaves of
$T$. Our intention is to have the individual particles, indexed by
$j\in[N]$, represented by leaves of a tree and the parents of leaves
representing the $M$ devices in the distributed architecture. If
each device $i$ is assigned $N_{i}$ particles then the children
of the node associated with device $i$ will be the $N_{i}$ leaves
associated with the particle indices $\left\{ 1+\sum_{j\in[i-1]}N_{j},\ldots,\sum_{j\in[i]}N_{j}\right\} $.
Beyond these two levels, the structure is purposefully abstract so
as to accommodate various choices which could, e.g., be related to
more complex architectural considerations such as the geographical
location of the devices. It is, however, assumed that each node is
physically contained on a single device although more than one node
may be physically contained on the same device. The general idea is
that a node will both facilitate and modulate interaction between
its children. Figure~\ref{fig:node_roles} shows a possible tree
with $4$ devices.

Let $T_{0}$ be a tree with root node $\nu_{0}$ and exactly $N$
leaves $\{\nu_{i}:i\in[N]\}$. We now define the set of leaf indices
associated with a node $\nu$ of $T_{0}$ to be the set of indices
associated with the leaves of $\mathcal{S}(\nu)$, i.e., we let $\ell(\nu_{i}):=\{i\}$
for each $i\in[N]$, and for each $\nu\in T_{0}$ such that $\mathcal{C}(\nu)\neq\emptyset$,
$\ell(\nu):=\bigcup_{\chi\in\mathcal{C}(\nu)}\ell(\chi)$. Without
ambiguity we also define, for $T$ a subtree of $T_{0}$, $\ell(T):=\ell(\mathcal{R}(T))$.
For some $c\in\mathbb{R}_{+}^{N}$, we define the value of each node
$\nu$ to be
\[
\mathcal{V}(\nu):=\left(\mathcal{V}_{1}(\nu),\mathcal{V}_{2}(\nu)\right):=\left(\left|\ell(\nu)\right|,\sum_{j\in\ell(\nu)}c^{j}\right),
\]
so that the value of leaf node $\nu_{i}$, e.g., is $\mathcal{V}(\nu_{i})=(1,c^{i})$.
Once the values of the leaves have been set, Algorithm~\ref{alg:populate}
can be invoked on $\nu_{0}$ to calculate recursively the values of
the rest of the nodes in the tree, and is motivated by the fact that,
element-wise,
\begin{equation}
\mathcal{V}(\nu)=\sum_{\chi\in\mathcal{C}(\nu)}\mathcal{V}(\chi),\label{eq:value_nodes_children}
\end{equation}
when $\mathcal{C}(\nu)\neq\emptyset$. This is an instance of a recursive
reduction algorithm suitable for implementation in both parallel and
distributed settings (see, e.g., \citep{hillis1986data,isard2007dryad})
which can be called on the root of the subtree in question. Typically,
one will call it on $\nu_{0}$ to populate the entire tree $T_{0}$.
The time complexity associated with each node $\nu$'s computation
is in $\mathcal{O}(|\mathcal{C}(\nu)|)$.

\begin{algorithm}
\protect\caption{Populate a subtree\label{alg:populate}}

\texttt{$\mathtt{populate}(\nu)$}
\begin{enumerate}
\item If $\mathcal{C}(\nu)=\emptyset$, return $\mathcal{V}(\nu)$.
\item Otherwise, set $\mathcal{V}(\nu)\leftarrow\sum_{\chi\in\mathcal{C}(\nu)}\mathtt{populate}(\chi)$,
where the summation is component-wise.
\item Return $\mathcal{V}(\nu)$.\end{enumerate}
\end{algorithm}

\subsection{Graphs induced by trees and forests\label{sub:graphs_induced_trees_forests}}

We now take the first step towards connecting our tree data structures
with the type of graphs discussed in Section~\ref{sec:effective_sample_size}.
We define the graph induced by a tree $T$ to be 
\begin{equation}
G(T):=\kappa(\ell(T)),\label{eq:graph_tree}
\end{equation}
the complete graph with vertices $\ell(T)$. This allows us to define
the substochastic matrix induced by a tree as $\phi(T):=\phi(G(T))$.
It is immediately obvious that the only member of $\mathbb{A}_{[N]}$
that can be induced by a single tree is $\phi(T_{0})={\bf 1}_{1/N}$.
The notion of a forest allows a richer subset of $\mathbb{A}_{[N]}$
to be specified using trees.
\begin{defn}
(Forest) A forest $F$ is a set of pairwise disjoint trees.
\end{defn}
It follows from this definition that if $T,T'\in F$ are distinct,
then $\ell(T)\cap\ell(T')=\emptyset$. If $T$ is a tree then $\{T\}$
and $\{\mathcal{S}(\nu):\nu\in\mathcal{C}(\mathcal{R}(T))\}$ are
both examples of forests. In what follows, the forests defined will
always be comprised of subtrees of $T_{0}$. Figure~\ref{fig:tree_forest_graphs}
supplements the example from Figure~\ref{fig:complete_graphs_mtx}
with a possible associated tree data structure and forest of subtrees.

\begin{figure*}
\center
\begin{tikzpicture}[level/.style={sibling distance=60mm/#1},level distance=20pt]
\def \boffset {20pt}
 \node[draw, circle] (v1) {$$}
	child {node [circle,draw,fill=black] (n1) {$$} }
 	child { 	 	node [circle,draw] (v2) {$$}
		child {
			node [circle,draw,fill=black] (v5) {$$}
			child {node [circle,draw,fill=black] (n2) {$$}}
			child {node [circle,draw,fill=black] (n3) {$$}}
		}
		child {
			node [circle,draw,fill=black] (v3) {$$}
			child {node [circle,draw,fill=black] (n4) {$$}}
			child {
				node [circle,draw,fill=black] (v4) {$$}
				child {node [circle,draw,fill=black] (n5) {$$}}
				child {node [circle,draw,fill=black] (n6) {$$}}
			}
		}
	}
 ;
\path (n6); \pgfgetlastxy{\XCoord}{\Bottom};
\path (n1); \pgfgetlastxy{\XCoord}{\YCoord};
\node[draw, circle] (T1) at (\XCoord,\Bottom-\boffset) {$7$};
\path (n2); \pgfgetlastxy{\XCoord}{\YCoord};
\node[draw, circle] (T2) at (\XCoord,\Bottom-\boffset) {$1$};
\path (n3); \pgfgetlastxy{\XCoord}{\YCoord};
\node[draw, circle] (T3) at (\XCoord,\Bottom-\boffset) {$6$};
\path (n4); \pgfgetlastxy{\XCoord}{\YCoord};
\node[draw, circle] (T4) at (\XCoord,\Bottom-\boffset) {$3$};
\path (n5); \pgfgetlastxy{\XCoord}{\YCoord};
\node[draw, circle] (T5) at (\XCoord,\Bottom-\boffset) {$4$};
\path (n6); \pgfgetlastxy{\XCoord}{\YCoord};
\node[draw, circle] (T6) at (\XCoord,\Bottom-\boffset) {$5$};
\path[every node/.style={font=\sffamily\small}]
    (T2) edge node [left] {} (T3)
    (T4) edge [bend right=40] node [left] {} (T6)
    (T4) edge node [left] {} (T5)
    (T5) edge node [left] {} (T6)
         
	(n1) edge [dashed] node [left] {} (T1)
	(n2) edge [dashed] node [left] {} (T2)
	(n3) edge [dashed] node [left] {} (T3)
	(n4) edge [dashed] node [left] {} (T4)
	(n5) edge [dashed] node [left] {} (T5)
	(n6) edge [dashed] node [left] {} (T6)

	(v5) edge [line width=2pt] node [left] {} (n2)
    (v5) edge [line width=2pt] node [left] {} (n3)
    (v3) edge [line width=2pt] node [left] {} (v4)
    (v3) edge [line width=2pt] node [left] {} (n4)
    (v4) edge [line width=2pt] node [left] {} (n5)
    (v4) edge [line width=2pt] node [left] {} (n6)
    ;
\end{tikzpicture}
\caption{A forest made up of subtrees of a tree, and the complete graphs induced by each tree in the forest.}
\label{fig:tree_forest_graphs}
\end{figure*}

We define the set of leaf indices associated with a forest to be $\ell(F):=\bigcup_{T\in F}\ell(T)$.
We also let $\mathbb{F}_{V}:=\{F:\ell(F)=V\}$, where $V\subseteq[N]$,
and $\mathbb{F}:=\bigcup_{V\subseteq[N]}\mathbb{F}_{V}$. We can relate
any $F\in\mathbb{F}$ to a member of $\mathbb{G}$ by defining
\[
G(F):=\bigcup_{T\in F}G(T)=\bigcup_{T\in F}\kappa(\ell(T)).
\]
From (\ref{eq:phi_union_addition}), the substochastic matrix induced
by $F\in\mathbb{F}$ is then
\[
\phi(F):=\phi(G(F))=\sum_{T\in F}\phi(\kappa(\ell(T))).
\]
One can therefore think of a forest $F$ as being a data structure
counterpart to a disjoint union of complete graphs represented by
the partition $P=\left\{ \ell(T):T\in F\right\} $.

\subsection{Forest resampling\label{sub:forest_resampling}}

We now introduce practical methodology that, given a forest $F\in\mathbb{F}_{[N]}$,
enables implementation of step $(\ddagger)$ of Algorithm~\ref{alg:aSMC}
when $\alpha_{n-1}=\phi(F)$. Let $c$ be given by $c^{i}:=W_{n-1}^{i}g_{n-1}(\zeta_{n-1}^{i}),\: i\in[N]$,
so that our goal is to sample, for each $i\in[N]$,
\[
\zeta_{n}^{i}\mid\zeta_{0},\ldots,\zeta_{n-1}\;\sim\;\dfrac{\sum_{j}\alpha_{n-1}^{ij}c^{j}f(\zeta_{n-1}^{j},\cdot)}{\sum_{k}\alpha_{n-1}^{ik}c^{k}},
\]
which can be implemented in two substeps. First one simulates an ancestor
index $A_{n-1}^{i}$ with 
\[
\mathsf{P}\left(A_{n-1}^{i}=j\mid\zeta_{0},\ldots,\zeta_{n-1}\right)=\dfrac{\alpha_{n-1}^{ij}c^{j}}{\sum_{k}\alpha_{n-1}^{ik}c^{k}},
\]
and then, secondly, simulates $\zeta_{n}^{i}\sim f(\zeta_{n-1}^{A_{n-1}^{i}},\cdot)$.
Implementation of the second step is a model-specific matter, so we
focus on the first step. We define $t_{F}$ to be the tree-valued
map where for any $i\in\ell(F)$, $t_{F}(i)$ is the unique tree $T\in F$
such that $i\in\ell(T)$. It then follows that $\alpha_{n-1}^{ij}=\mathbb{I}\left\{ j\in\ell(t_{F}(i))\right\} /\left|\ell(t_{F}(i))\right|$
and so we can write
\begin{equation}
\mathsf{P}\left(A_{n-1}^{i}=j\mid\zeta_{0},\ldots,\zeta_{n-1}\right)=\dfrac{\mathbb{I}\left\{ j\in\ell(t_{F}(i))\right\} c^{j}}{\sum_{k\in\ell(t_{F}(i))}c^{k}},\label{eq:PrAij_forest}
\end{equation}
which implies that $A_{n-1}^{i}$ is categorically distributed over
$\ell(t_{F}(i))$ with probabilities proportional to $\left\{ c^{k}:k\in\ell(t_{F}(i))\right\} $.

Following (\ref{eq:PrAij_forest}), we propose Algorithm~\ref{alg:sample_tree},
which given the root node $\mathcal{R}(T)$ of an arbitrary subtree
$T$ of $T_{0}$, samples from a distribution over $\ell(T)$ with
probability mass function
\[
p_{T}(j):=\frac{\mathbb{I}\left\{ j\in\ell(T)\right\} c^{j}}{\sum_{k\in\ell(T)}c^{k}},\quad j\in\ell(T).
\]
Each recursive call of Algorithm~\ref{alg:sample_tree} with argument
$\nu$ has a time complexity in $\mathcal{O}\left(\left|\mathcal{C}(\nu)\right|\right)$.

\begin{algorithm}
\protect\caption{Obtain a sample according to $p_{\mathcal{S}(\nu)}$\label{alg:sample_tree}}

\texttt{$\mathtt{sample}(\nu)$}
\begin{enumerate}
\item If $\mathcal{C}(\nu)=\emptyset$, return the only element in $\ell(\nu)$.
\item Otherwise, let $\chi_{1},\ldots,\chi_{|\mathcal{C}(\nu)|}$ be the
children of $\nu$. 
\item Sample $i$ from a categorical distribution over $\left[\left|\mathcal{C}(\nu)\right|\right]$
with probabilities proportional to $\left\{ \mathcal{V}_{2}(\chi_{i}):i\in\left[\left|\mathcal{C}(\nu)\right|\right]\right\} $.
\item Return \texttt{$\mathtt{sample}(\chi_{i})$}.\end{enumerate}
\end{algorithm}

\begin{prop}
\label{prop:tree_sample_correctness}The probability that Algorithm~\ref{alg:sample_tree}
returns $j\in\ell(\nu)$ is $p_{\mathcal{S}(\nu)}(j)$.
\end{prop}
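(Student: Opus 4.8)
The plan is to prove the claim by structural induction on the subtree $\mathcal{S}(\nu)$, exploiting the fact that the child subtrees $\{\mathcal{S}(\chi):\chi\in\mathcal{C}(\nu)\}$ are strictly smaller than $\mathcal{S}(\nu)$ and so furnish a valid induction hypothesis. Throughout I would use that the tree has been populated, so that $\mathcal{V}_2(\nu)=\sum_{j\in\ell(\nu)}c^j$ at every node; this follows from the definition of $\mathcal{V}$ together with the recursion~(\ref{eq:value_nodes_children}). I would also repeatedly invoke that the leaf-index sets of the children partition that of their parent, i.e.\ $\ell(\nu)=\bigcup_{\chi\in\mathcal{C}(\nu)}\ell(\chi)$ with the union disjoint, as built into the definition of $\ell$.

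For the base case, suppose $\nu$ is a leaf, so $\mathcal{C}(\nu)=\emptyset$ and $\ell(\nu)=\{i\}$ is a singleton. Step~1 of Algorithm~\ref{alg:sample_tree} returns $i$ with probability one, which matches $p_{\mathcal{S}(\nu)}$ since $p_{\mathcal{S}(\nu)}(j)=\mathbb{I}\{j=i\}c^j/c^i=\mathbb{I}\{j=i\}$.

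For the inductive step, let $\nu$ have children $\chi_1,\ldots,\chi_m$ with $m=|\mathcal{C}(\nu)|\geq 1$, and assume the result holds for each $\mathcal{S}(\chi_r)$. Conditioning on the index $r$ selected in step~3 and applying the law of total probability, the probability that $\mathtt{sample}(\nu)$ returns a given $j$ is
\[
\sum_{r=1}^{m}\frac{\mathcal{V}_2(\chi_r)}{\sum_{s=1}^{m}\mathcal{V}_2(\chi_s)}\;\mathsf{P}\bigl(\mathtt{sample}(\chi_r)=j\bigr).
\]
By the induction hypothesis the inner probability equals $p_{\mathcal{S}(\chi_r)}(j)=\mathbb{I}\{j\in\ell(\chi_r)\}c^j/\sum_{k\in\ell(\chi_r)}c^k$. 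Because the children's leaf sets partition $\ell(\nu)$, for $j\in\ell(\nu)$ exactly one term survives — the one with $j\in\ell(\chi_r)$ — while the normalizing sum $\sum_{s}\mathcal{V}_2(\chi_s)$ equals $\mathcal{V}_2(\nu)=\sum_{k\in\ell(\nu)}c^k$. The crux is that $\mathcal{V}_2(\chi_r)=\sum_{k\in\ell(\chi_r)}c^k$ cancels against the denominator of $p_{\mathcal{S}(\chi_r)}(j)$, leaving $c^j/\sum_{k\in\ell(\nu)}c^k=p_{\mathcal{S}(\nu)}(j)$; for $j\notin\ell(\nu)$ every term vanishes, agreeing with $p_{\mathcal{S}(\nu)}(j)=0$.

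I do not anticipate a genuine obstacle: the argument is essentially a verification that the stored node values are exactly the weight sums needed for the cancellation. The only points needing care are to confirm that the population invariant $\mathcal{V}_2(\chi)=\sum_{k\in\ell(\chi)}c^k$ holds at every node reached during the recursion, so that the selection probabilities in step~3 are the intended conditional weights, and that the disjointness of sibling leaf sets is used to guarantee the single surviving term.
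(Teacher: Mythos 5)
Your proof is correct and takes essentially the same route as the paper: the paper writes the return probability directly as a telescoping product of child-selection probabilities $\frac{c^j}{\sum_{k\in\ell(\nu^1)}c^k}\prod_{i}\frac{\sum_{k\in\ell(\nu^i)}c^k}{\sum_{k\in\ell(\nu^{i+1})}c^k}$ along the unique path from $\nu$ down to the leaf $\nu_j$, which is exactly what your structural induction yields when unrolled. The only difference is packaging — your law-of-total-probability induction makes explicit the disjointness of sibling leaf sets and the population invariant $\mathcal{V}_2(\chi)=\sum_{k\in\ell(\chi)}c^k$, details the paper's one-line computation leaves implicit.
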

Sampling according to (\ref{eq:PrAij_forest}) for each $i\in[N]$
can be accomplished by calling Algorithm~\ref{alg:sample_tree} $N$
times with potentially different inputs. For example, if $F=\{T_{0}\}$
then one would call Algorithm~\ref{alg:sample_tree} $N$ times on
$\nu_{0}=\mathcal{R}(T_{0})$, corresponding to standard multinomial
resampling with $\alpha_{n-1}=\phi(F)=\mathbf{1}_{1/N}$. In contrast,
if $F=\left\{ \mathcal{S}(\nu_{1}),\ldots,\mathcal{S}(\nu_{N})\right\} $
then one would call Algorithm~\ref{alg:sample_tree} once on each
member of $\left\{ \nu_{1},\ldots,\nu_{N}\right\} $ with the effect
that $A_{n-1}^{i}=i$ for each $i\in[N]$, and this corresponds to
$\alpha_{n-1}=\phi(F)=Id$. An intermediate between these two extremes
would be if $F=\left\{ \mathcal{S}(\nu^{1}),\ldots,\mathcal{S}(\nu^{M})\right\} $,
where $\nu^{i}$ represents device node $i$ in $T_{0}$, cf. Section~\ref{sub:distributed_arch}.
Then, for each $i\in[M]$, one would call Algorithm~\ref{alg:sample_tree}
$\left|\ell(\nu^{i})\right|$ times, once to set each ancestor index
in $\left\{ A_{n-1}^{j}:j\in\ell(\nu^{i})\right\} $. These special
cases also exemplify a more general phenomenon: sampling according
to (\ref{eq:PrAij_forest}) using Algorithm~\ref{alg:sample_tree}
does not require the explicit computation of $\alpha_{n-1}$. In Section~\ref{sub:forest_selection}
we address the issue of how a forest can be chosen adaptively.

Finally, we note that step $(\dagger)$ of Algorithm~\ref{alg:aSMC}
can also be accomplished straightforwardly when $\alpha_{n-1}=\phi(F)$.
Indeed, then 
\begin{eqnarray*}
W_{n}^{i} & = & \sum_{j}\phi(F)^{ij}c^{j}=\sum_{k\in\ell(t_{F}(i))}c^{k}/\left|\ell(t_{F}(i))\right|\\
 & = & \mathcal{V}_{2}(\mathcal{R}(t_{F}(i)))/\mathcal{V}_{1}(\mathcal{R}(t_{F}(i))).
\end{eqnarray*}

\subsection{Forest selection\label{sub:forest_selection}}

Our attention now turns to implementing the $(\star)$ step of Algorithm~\ref{alg:aSMC}.
This can be performed by choosing a forest $F\in\mathbb{F}_{[N]}$
such that $N^{{\rm eff}}(\phi(F),c)\geq\tau N$. Algorithm~\ref{alg:forest_constrained}
is a recursive implementation of such a procedure, and is essentially
a practical analogue of Algorithm~\ref{alg:Choosea}. The $(\star\star)$
step in this algorithm is specified only abstractly, with concrete
choices the subject of Section~\ref{sub:partitioning_strategies}.
Like steps $(\dagger)$ and $(\ddagger)$ when implemented according
to the procedures of Section~\ref{sec:Forest-resampling}, step $(\star\star)$
also involves only local computations in the following sense. Recalling
Definition~\ref{def:coarsening}, choosing $P'$ to be a partition
of $\mathcal{C}(\nu)$ implies that $P$ is a coarsening of $\left\{ \ell(\chi):\chi\in\mathcal{C}(\nu)\right\} $,
and so the computation of $\rho(P,c)$ involves only the quantities
$\left|\ell(\chi)\right|$ and $\sum_{j\in\ell(\chi)}c^{j}$ for each
$\chi\in\mathcal{C}(\nu)$, which are readily available through $\left\{ \mathcal{V}(\chi):\chi\in\mathcal{C}(\nu)\right\} $.

\begin{algorithm}
\protect\caption{Specify a forest $F$ with $\ell(F)=\ell(\nu)$ and $N^{{\rm eff}}(\phi(F),c)\geq\tau|\ell(V)|$\label{alg:forest_constrained}}

\texttt{choose.forest}$(\nu,\tau)$
\begin{enumerate}
\item If $\mathcal{C}(\nu)=\emptyset$ then return $\{\mathcal{S}(\nu)\}$.
\item $(\star\star)$ Choose a partition $P'$ of $\mathcal{C}(\nu)$ such
that $\rho(P,c)\geq\tau$, where $P=\left\{ \bigcup_{\chi\in S}\ell(\chi):S\in P'\right\} $.
\item If $P'=\{\mathcal{C}(\nu)\}$ then return $\{\nu\}$. Otherwise, set
$R\leftarrow\emptyset$.
\item For each element $S\in P'$

\begin{enumerate}
\item If $|S|>1$ then create a node $\nu'$ with children $\{\chi:\chi\in S\}$
and set $R\leftarrow R\cup\{\mathcal{S}(\nu')\}$.
\item If $S=\{\chi\}$, set $R\leftarrow R\cup\mathtt{choose.forest}\left(\chi,\tau/\rho(P,c)\right)$.
\end{enumerate}
\item Return $R$.\end{enumerate}
\end{algorithm}

In Algorithm~\ref{alg:forest_constrained}, new nodes can be created.
It is assumed that when this happens, the values of the new nodes
are set appropriately according to (\ref{eq:value_nodes_children}). 

Gathering together Algorithms~\ref{alg:populate},~\ref{alg:sample_tree}
and~\ref{alg:forest_constrained} we now arrive at Algorithm~\ref{alg:aSMCimpl},
which is an implementation of Algorithm~\ref{alg:aSMC} using trees
and forests.

The recursive nature of the algorithms presented allow them to be
fairly straightforwardly translated into architecture specific implementations.
In particular, it is imagined that the computations of Algorithms~\ref{alg:populate},~\ref{alg:sample_tree}
and~\ref{alg:forest_constrained} all take place on the device on
which their node argument physically resides, and that the recursive
calls then represent messages passed over the network. In addition,
Algorithms~\ref{alg:populate} and~\ref{alg:forest_constrained}
are divide-and-conquer algorithms naturally suited to parallel implementation. 

The exact implementation of the algorithms may vary slightly, depending
on the architectures involved, without changing in principle. For
example, one implementation of Step 2e of Algorithm~\ref{alg:aSMCimpl}
could involve each device sending its list of associated indices ``up''
the tree until it reaches its root in the forest. From there, the
indices may filter ``down'' the tree in a slight variant of Algorithm~\ref{alg:sample_tree}
until they reach their leaves. If index $i$ reaches leaf $\nu_{j}$,
say, the device housing $\nu_{j}$ can send $\zeta_{n-1}^{j}$ to
the device housing $\nu_{i}$, which can then sample $\zeta_{n}^{i}\sim f(\zeta_{n-1}^{j},\cdot)$.

\begin{algorithm}
\protect\caption{$\alpha$SMC with forest resampling\label{alg:aSMCimpl}}

\begin{enumerate}
\item For $i\in[N]$, sample $\zeta_{0}^{i}\sim\pi_{0}$ and set $W_{0}^{i}\leftarrow1$.
\item For $n\geq1$:

\begin{enumerate}
\item Create an unpopulated tree $T_{0}$ with root $\nu_{0}$ and leaves
$\left\{ \nu_{i}:i\in[N]\right\} $.
\item For each $i\in[N]$, set 
\[
\mathcal{V}(\nu_{i})\leftarrow\left(1,W_{n-1}^{i}g_{n-1}(\zeta_{n-1}^{i})\right).
\]

\item Call $\mathtt{populate}(\nu_{0})$.
\item Set \texttt{$F\leftarrow\mathtt{choose.forest}(\nu_{0},\tau)$}.
\item For each $i\in[N]$:

\begin{enumerate}
\item Set $W_{n}^{i}\leftarrow\mathcal{V}_{2}(\mathcal{R}(t_{F}(i)))/\mathcal{V}_{1}(\mathcal{R}(t_{F}(i)))$,
\item Set $j\leftarrow\mathtt{sample}(\mathcal{R}(t_{F}(i)))$,
\item Sample $\zeta_{n}^{i}\sim f(\zeta_{n-1}^{j},\cdot)$.\end{enumerate}
\end{enumerate}
\end{enumerate}
\end{algorithm}

\subsection{Partitioning strategies\label{sub:partitioning_strategies} }

The $(\star\star)$ step in Algorithm~\ref{alg:forest_constrained}
remains to be specified. A simple choice would be to choose the partition
$\left\{ \{\chi\}:\chi\in\mathcal{C}(\nu)\right\} $ if it satisfies
the condition in $(\star\star)$ and $\left\{ \mathcal{C}(\nu)\right\} $
otherwise. However, this could lead to more interaction than is necessary.

Before continuing, we note that selecting a partition of child nodes
of $\nu$ is equivalent to selecting a partition $P$ of $\ell(\nu)$
subject to the constraint that the chosen partition is a coarsening
of $P_{0}:=\left\{ \ell(\chi):\chi\in\mathcal{C}(\nu)\right\} $.
Therefore, we simplify the presentation by considering partitions
of $V\subseteq[N]$ instead of partitions of nodes and our goal is
to choose a partition $P\succeq P_{0}$ of $V$ such that $\rho(P,c)\geq\tau$.

If a specific order over coarsenings of $P_{0}$ is defined, one could
seek to find the minimal coarsening $P^{*}$ w.r.t. this order that
satisfies $\rho(P^{*},c)\geq\tau$. For example, one might wish to
find a $P\succeq P_{0}$ subject to $\rho(P,c)\geq\tau$ with the
maximal number of elements, or where the size of the largest element
is minimized, both of which could be translated roughly as $P$ being
as refined as possible. This can always be achieved by enumerating
candidate partitions $P_{1},P_{2},\ldots$ in the given order and
calculating $\rho(P_{i},c)$ for each until some $\rho(P_{i},c)\geq\tau$,
but this can quickly become computationally prohibitive as $\left|\mathcal{C}(\nu)\right|$
grows. Indeed, the number of candidate partitions is the $\left|\mathcal{C}(\nu)\right|$'th
Bell number. This type of integer programming optimization problem
is related to the Partition problem (see, e.g., \citep{mertens2006easiest})
and is likely to be NP-hard in general. We therefore focus on efficient
search strategies for finding a $P\succeq P_{0}$ subject to $\rho(P,c)\geq\tau$
for which we hope that $P$ is not much coarser than necessary.

Both of the strategies we introduce below consider a sequence of successively
coarser partitions $P_{1},P_{2},\ldots$ which satisfy the constraint
that $P_{0}\preceq P_{1}\preceq P_{2}\preceq\cdots$, where $P_{0}$
is as above, and returns $P_{j}$ such that $j=\min\left\{ i:\rho(P_{i},c)\geq\tau\right\} $.
This general procedure has the property that $\rho(P_{i},c)\geq\rho(P_{i-1},c)$
and $\left|P_{i}\right|\leq\left|P_{i-1}\right|-1$ for $i\in\left[\left|\mathcal{C}(\nu)\right|\right]$.
The latter, together with the fact that (from part 1 of Proposition~\ref{prop:ess_properties})
$\left|P\right|=1\implies\rho(P,c)=1\geq\tau$, implies that the total
number of partitions considered is at most $\left|\mathcal{C}(\nu)\right|$.
The specific strategies below are therefore defined by the precise
way in which the sequence $P_{1},P_{2}\ldots$ is chosen.

\subsubsection*{Pairing strategy for structured trees}

This strategy applies when each node in $T_{0}$ has a number of children
that is a power of $2$ and the number of leaves associated with each
child is equal. 
\begin{defn}
(Pairing of a partition) Let $P$ be a partition of $V\subseteq[N]$.
A pairing $P'$ of $P$ is a partition of $V$ where each element
of $P'$ is the union of two elements of $P$.
\end{defn}
\citet[Section~5.4]{whiteley2013role} suggested a ``greedy'' pairing
strategy, which we formalize in the following proposition. 
\begin{prop}
\label{prop:optimal_pairing}Let $P$ be a partition of $V\subseteq[N]$
with $P=\left\{ V_{i}:i\in[2M]\right\} $ for some $M\in[N]$, $M\leq N/2$.
Let $V_{i}$ be ordered such that $0\leq\sum_{j\in V_{1}}c^{j}\leq\cdots\leq\sum_{j\in V_{2M}}c^{j}$
and assume that $\left|V_{i}\right|=\left|V_{j}\right|$ for any $i,j\in[2M]$.
Then a pairing $P'$ of $P$ that maximizes $\rho(P',c)$ is given
by $P'=\left\{ \left\{ V_{1},V_{2M}\right\} ,\left\{ V_{2},V_{2M-1}\right\} ,\ldots,\left\{ V_{M},V_{M+1}\right\} \right\} $.
\end{prop}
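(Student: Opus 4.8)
The plan is to recognize that maximizing $\rho(P',c)$ over pairings $P'$ is equivalent to an elementary combinatorial minimization. Write $\sigma_i:=\sum_{j\in V_i}c^j$, so that $\sigma_1\leq\cdots\leq\sigma_{2M}$, and let $m$ denote the common cardinality $|V_i|$. By (\ref{eq:Neff_partition_formulation}) and (\ref{eq:fp_first}), both the numerator $\left(\sum_i\sigma_i\right)^2$ of $N^{{\rm eff}}(P',c)$ and the factor $|V|^{-1}$ are the same for every pairing, so maximizing $\rho(P',c)$ amounts to minimizing the denominator $\sum_{S\in P'}|S|^{-1}\left(\sum_{j\in S}c^j\right)^2$. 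Here the equal-size assumption is essential: because each element $S$ of a pairing is a union of exactly two blocks, $|S|=2m$ for every $S\in P'$, so the denominator becomes $(2m)^{-1}\sum_{\{i,j\}}(\sigma_i+\sigma_j)^2$, the sum running over the $M$ index pairs induced by $P'$. Expanding the square, $\sum_{\{i,j\}}(\sigma_i^2+\sigma_j^2)=\sum_{i=1}^{2M}\sigma_i^2$ is pairing-independent, so the problem reduces to minimizing $\sum_{\{i,j\}}\sigma_i\sigma_j$ over all partitions of $\{\sigma_1,\ldots,\sigma_{2M}\}$ into $M$ pairs.

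Next I would establish the key local exchange fact. Given four values $a\leq b\leq c\leq d$, the three possible pairings yield products $ab+cd$, $ac+bd$ and $ad+bc$, and
\[
(ab+cd)-(ad+bc)=(d-b)(c-a)\geq0,\qquad (ac+bd)-(ad+bc)=(d-c)(b-a)\geq0.
\]
Hence $ad+bc$ is the smallest of the three; that is, among any four values the sum of products is minimized by pairing the smallest with the largest and leaving the two middle values together. This elementary inequality is the engine of the whole argument.

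I would then globalize by an exchange/induction argument. The claim is that some minimizing pairing contains the pair $\{1,2M\}$: if a minimizer instead contained $\{1,a\}$ and $\{2M,b\}$ with $a\neq2M$, then applying the local fact to the four values $\sigma_1,\sigma_a,\sigma_b,\sigma_{2M}$ (among which $\sigma_1$ is least and $\sigma_{2M}$ greatest, since they are the global extremes) shows that re-pairing as $\{1,2M\}$ and $\{a,b\}$ does not increase $\sum\sigma_i\sigma_j$, while leaving every other pair untouched. Having fixed $\{1,2M\}$, I would delete these two indices and apply the induction hypothesis to the remaining sorted values $\sigma_2\leq\cdots\leq\sigma_{2M-1}$, whose optimal pairing is $\{2,2M-1\},\ldots,\{M,M+1\}$; together with $\{1,2M\}$ this is exactly the asserted $P'$. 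The base case $M=1$ is immediate.

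The main obstacle is really bookkeeping in the globalization step: one must verify that the single local exchange on the four indices $1,a,b,2M$ genuinely reduces to one of the three-way comparisons above (i.e. that $\sigma_1$ and $\sigma_{2M}$ are extremal among the four, which follows from the global sort), and that performing this exchange leaves the objective's contribution from the other $M-2$ pairs unchanged, so that the induction closes cleanly. Care is also needed to flag that the equal-cardinality hypothesis is precisely what licenses factoring $(2m)^{-1}$ out of the denominator; without it the pair contributions are weighted unequally and the clean reduction to minimizing $\sum\sigma_i\sigma_j$ breaks down.
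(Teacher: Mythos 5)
Your proposal is correct and follows essentially the same route as the paper's proof: reduce to minimizing the denominator (using the equal-cardinality hypothesis to factor out the common pair size), then an exchange argument showing the global minimum and maximum can be paired in some optimal pairing, followed by induction on the remaining blocks. Your extra step of expanding the squares to reduce to minimizing $\sum\sigma_i\sigma_j$ is a cosmetic simplification; the paper works directly with $\sum_{S}\left(\sum_{i\in S}x_i\right)^2$ and its exchange computation $2(x_k-x_1)(x_{2M}-x_j)\geq 0$ is the same inequality as your local four-element lemma.
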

In the pairing strategy, then, we define the sequence of partitions
$P_{1},P_{2},\ldots$ by each $P_{i}$ being the optimal pairing of
$P_{i-1}$ provided by Proposition~\ref{prop:optimal_pairing}.

\subsubsection*{Matching strategy}

This strategy does not rely on any particular structure of $T_{0}$
and therefore is applicable more generally than the pairing strategy. 
\begin{prop}
\label{prop:single_match_optimal}For some $K\in[N]$ let $P=\{V_{i}:i\in[K]\}$
be a partition of $V$ and $P_{k,l}:=\{V_{i}:i\in[K]\}\setminus\{V_{k},V_{l}\}\cup\{V_{k}\cup V_{l}\}$
a coarsening of $P$ associated with the indices $k,l\in[K]$. Then
the choice of $k,l\in[K]$ that maximizes $\rho(P_{k,l},c)$ is 
\[
\arg\max_{(k,l)\in[K]^{2}}\frac{\left|V_{k}\right|\left|V_{l}\right|}{\left|V_{k}\right|+\left|V_{l}\right|}\left(\frac{\sum_{j\in V_{k}}c^{j}}{\left|V_{k}\right|}-\frac{\sum_{j\in V_{l}}c^{j}}{\left|V_{l}\right|}\right)^{2}.
\]

\end{prop}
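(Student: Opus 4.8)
We want to find the pair $(k,l)$ whose merging $P_{k,l}$ maximizes $\rho(P_{k,l},c)$. Recall $\rho(P,c)=|V|^{-1}N^{\text{eff}}(P,c)$, and $|V|$ is the same for every coarsening of $P$, so maximizing $\rho(P_{k,l},c)$ is equivalent to maximizing $N^{\text{eff}}(P_{k,l},c)$. Looking at the formula for $N^{\text{eff}}(P,c)$ in equation (8), the numerator $(\sum_{S\in P}\sum_{j\in S}c^j)^2$ is just the square of the total weight $\sum_{j\in V}c^j$, which does not change when we merge blocks. Hence maximizing $N^{\text{eff}}(P_{k,l},c)$ reduces to \emph{minimizing} the denominator $D(P_{k,l}):=\sum_{S\in P_{k,l}}|S|(|S|^{-1}\sum_{j\in S}c^j)^2$.

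**The plan.** Let me write $m_i:=\sum_{j\in V_i}c^j$ and $n_i:=|V_i|$, so that block $V_i$ contributes $m_i^2/n_i$ to the denominator. I would first observe that when we merge $V_k$ and $V_l$ into a single block, all other blocks are untouched, so the denominators of $P$ and of $P_{k,l}$ differ only in the terms involving $k$ and $l$. The plan is thus to compute
\[
D(P)-D(P_{k,l})=\frac{m_k^2}{n_k}+\frac{m_l^2}{n_l}-\frac{(m_k+m_l)^2}{n_k+n_l}.
\]
Since $D(P)$ is a constant independent of the choice of $(k,l)$, maximizing $\rho(P_{k,l},c)$ (equivalently, minimizing $D(P_{k,l})$) is the same as maximizing this difference.

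**The key algebraic step.** The remaining work is to simplify the right-hand side into the claimed form. I would put the three fractions over the common denominator $n_k n_l (n_k+n_l)$ and expand the numerator. The two cross terms $\tfrac{m_k^2}{n_k}$ and $\tfrac{m_l^2}{n_l}$, combined and subtracted by $(m_k+m_l)^2/(n_k+n_l)$, collapse—after cancelling the $m_k^2$ and $m_l^2$ pieces—to an expression proportional to $(m_k n_l - m_l n_k)^2$. Factoring out $n_k n_l$ from that squared term yields
\[
D(P)-D(P_{k,l})=\frac{n_k n_l}{n_k+n_l}\left(\frac{m_k}{n_k}-\frac{m_l}{n_l}\right)^2,
\]
which is exactly the objective function in the statement upon substituting back $m_i=\sum_{j\in V_i}c^j$ and $n_i=|V_i|$. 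Taking the $\arg\max$ over $(k,l)$ then gives the result.

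**Main obstacle.** This proof is essentially a single algebraic identity, so there is no conceptual difficulty. The only point requiring care is the bookkeeping in the expansion: one must verify that the numerator $m_k^2 n_l(n_k+n_l)+m_l^2 n_k(n_k+n_l)-(m_k+m_l)^2 n_k n_l$ really does reduce to $n_k n_l(m_k n_l - m_l n_k)^2/(n_k n_l)$—i.e. that the $m_k^2$ and $m_l^2$ terms partially cancel and the survivor is a perfect square times $n_k n_l$. I would also note at the outset that merging can only decrease $N^{\text{eff}}$ (consistent with Proposition~\ref{prop:order_preserving}, since $P_{k,l}\succeq P$), which serves as a sanity check that the difference above is nonnegative, as indeed the squared factor guarantees.
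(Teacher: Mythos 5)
Your proposal is correct and follows essentially the same route as the paper: both reduce maximizing $\rho(P_{k,l},c)$ to minimizing the denominator $\sum_{S\in P_{k,l}}|S|\left(|S|^{-1}\sum_{j\in S}c^{j}\right)^{2}$, note that merging $V_{k}$ and $V_{l}$ changes only the two corresponding terms, and collapse the resulting difference into the perfect-square form $\frac{|V_{k}||V_{l}|}{|V_{k}|+|V_{l}|}\bigl(\frac{\sum_{j\in V_{k}}c^{j}}{|V_{k}|}-\frac{\sum_{j\in V_{l}}c^{j}}{|V_{l}|}\bigr)^{2}$. The only cosmetic difference is that the paper cites this algebraic identity from its proof of the subadditivity part of Proposition~\ref{prop:ess_properties}, whereas you verify it directly (and your expansion is indeed correct).
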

When $[K]$ is large, maximizing this expression by evaluating it
for each $(k,l)\in[K]^{2}$ has a time complexity in $\mathcal{O}(K^{2})$,
which we wish to avoid. Therefore, we resort to finding the $(k,l)\in[K]^{2}$
for which only the squared expression is maximized. This happens when
$k$ and $l$ correspond to the sets of indices whose associated terms
in the squared expression are most different.

The matching strategy therefore defines the successively coarser partitions
$P_{1},P_{2},\ldots$ by letting $S_{i-1}^{\min}=\arg\min_{S\in P_{i-1}}\left|S\right|^{-1}\sum_{j\in S}c^{j}$,
$S_{i-1}^{\max}=\arg\max_{S\in P_{i-1}}\left|S\right|^{-1}\sum_{j\in S}c^{j}$,
and setting 
\[
P_{i}=P_{i-1}\setminus\left\{ S_{i-1}^{\min},S_{i-1}^{\max}\right\} \cup\left\{ S_{i-1}^{\min}\cup S_{i-1}^{\max}\right\} .
\]
An interpretation of this is that the elements of the partition with
whose associated values are most different are successively matched.

\section{Discussion}

\subsection{Numerical illustrations\label{sub:Numerical-illustrations}}

We consider a simplified HMM whose empirical analysis illustrates
the cost of the forest resampling schemes. In particular, we assume
that the HMM equations (\ref{eq:HMM}) satisfy the additional conditional
independence criterion that for any $x\in\mathsf{X}$, $f(x,\cdot)=\pi_{0}(\cdot)$,
and that $g_{n}$ in (\ref{eq:g_n}) is time-homogeneous with $g_{n}(x)=g(x)$.
We further assume that when $X\sim\pi_{0}$, $g(X)$ is a $\ln\mathcal{N}\left(-\frac{\sigma^{2}}{2},\sigma^{2}\right)$
random variable, with mean $1$ and variance $\exp\left(\sigma^{2}\right)-1$.
This model is not intended to be a realistic, challenging application
of SMC. Instead, its greatly simplified structure allows for transparent
analysis and easy replication of results; the time-homogeneous nature
of the model makes it well-suited for assessing the \emph{computational}
cost of resampling for large $n$, and its conditional independence
structure allows us to make some calculations which explicitly show
how the ESS is related to the moments of $Z_{n}^{N}$ and $\pi_{n}^{N}(\varphi)$. 

Writing $\mathbb{E}$ and $\mathbb{V}$ for respectively expectation
and variance under the SMC algorithm, and for some measure $\mu$
and function $\varphi$, ${\rm var}_{\mu}(\varphi):=\int_{\mathsf{X}}\left[\varphi(x)-\mu(\varphi)\right]^{2}\mu(dx)$,
one can verify from $(\dagger)$, (\ref{eq:ESS_defn_front}) and (\ref{eq:particle_approximations})
that $\mathbb{E}\left(Z_{n}^{N}\mid\zeta_{0},\ldots\zeta_{n-1}\right)=\pi_{0}(g)Z_{n-1}^{N}=Z_{n-1}^{N}$
with 
\[
\mathbb{V}\left(\frac{Z_{n}^{N}}{Z_{n-1}^{N}}\mid\zeta_{0},\ldots\zeta_{n-2}\right)=\frac{{\rm var}_{\pi_{0}}\left(g\right)}{N_{n-1}^{\text{eff}}}=\frac{\exp\left(\sigma^{2}\right)-1}{N_{n-1}^{\text{eff}}},
\]
and $\mathbb{E}\left(\pi_{n}^{N}(\varphi)\mid\zeta_{0},\ldots\zeta_{n-1}\right)=\pi_{n}(\varphi)=\pi_{0}(\varphi)$
with 
\[
\mathbb{V}\left(\pi_{n}^{N}(\varphi)\mid\zeta_{0},\ldots\zeta_{n-1}\right)=\frac{{\rm var}_{\pi_{0}}\left(\varphi\right)}{N_{n-1}^{\text{eff}}}.
\]

We define the cost of an $\alpha$SMC resampling step at time $n$
to be the average degree of the vertices in the forest corresponding
to the $\alpha_{n-1}$ transition matrix chosen in $(\star)$ of Algorithm~\ref{alg:aSMC},
which we denote $d_{n}^{N}$. For example, when $\alpha_{n-1}=Id$
the cost is $1$ and when $\alpha_{n-1}=\mathbf{1}_{1/N}$ the cost
is $N$. We ran Algorithm~\ref{alg:aSMC} for $n=200$ iterations
with various values of $\tau$ and $\sigma$ and $N=2^{12}=4096$
particles. One can think of the value of $N$ reported here as being
a large multiple of $4096$ since, conceptually, one could imagine
that the leaves in this experiment represent devices with a large
number of particles. The tree $T_{0}$ used at each iteration always
consisted of three levels with each node except the leaves having
$2^{4}=16$ children, but the leaf/device indices were permuted at
each iteration. 

Figure~\ref{fig:avgcosts_ess} shows the behaviour of $\bar{d}=n^{-1}\sum_{p=1}^{n}d_{p}^{N}$
and $\overline{N^{\text{eff}}}=n^{-1}\sum_{p=1}^{n}N_{p}^{\text{eff}}$
as $\tau$ and $\sigma$ vary using the adaptive resampling particle
filter (ARPF) of \citep{liu1995blind}, and the two proposed strategies
in Section~\ref{sub:partitioning_strategies}, all instances of $\alpha$SMC.
We can see that the ARPF is particularly expensive in terms of average
degree, and has a higher average ESS than the rest. The pairing and
matching strategies perform much better with the latter being less
expensive and having an ESS much closer to the threshold. In all cases,
increases in $\tau$ and $\sigma$ increase the cost of the algorithm,
as one would expect. However, the shape of the curve in Figure~\ref{fig:closeup_dbar}
suggests that increasing $\tau$ beyond around $0.5$ rapidly becomes
expensive. Indeed, the value $\tau=1$ corresponds to an average degree
of $4096$ in this example for any of the methods, which is not shown,
and is almost $10$ times larger than the corresponding cost for $\tau=224/225\approx0.996$,
the rightmost point shown. Figure~\ref{fig:fixed_ESS_increasing_N_match}
shows further that fixing $N\tau=2048$ but increasing $N$ has the
effect of reducing the average degree to close to $2$ and suggests
that optimizing, in terms of computational cost, the choice of $N$
and $\tau$ with a given target ESS $N\tau$ could involve choosing
a large $N$ and a small $\tau$, depending on the relative cost of
increasing $N$ compared to the cost in interactions.

\begin{figure*}
\noindent \centering{}\subfloat[]{\noindent \begin{centering}
\includegraphics[scale=0.55]{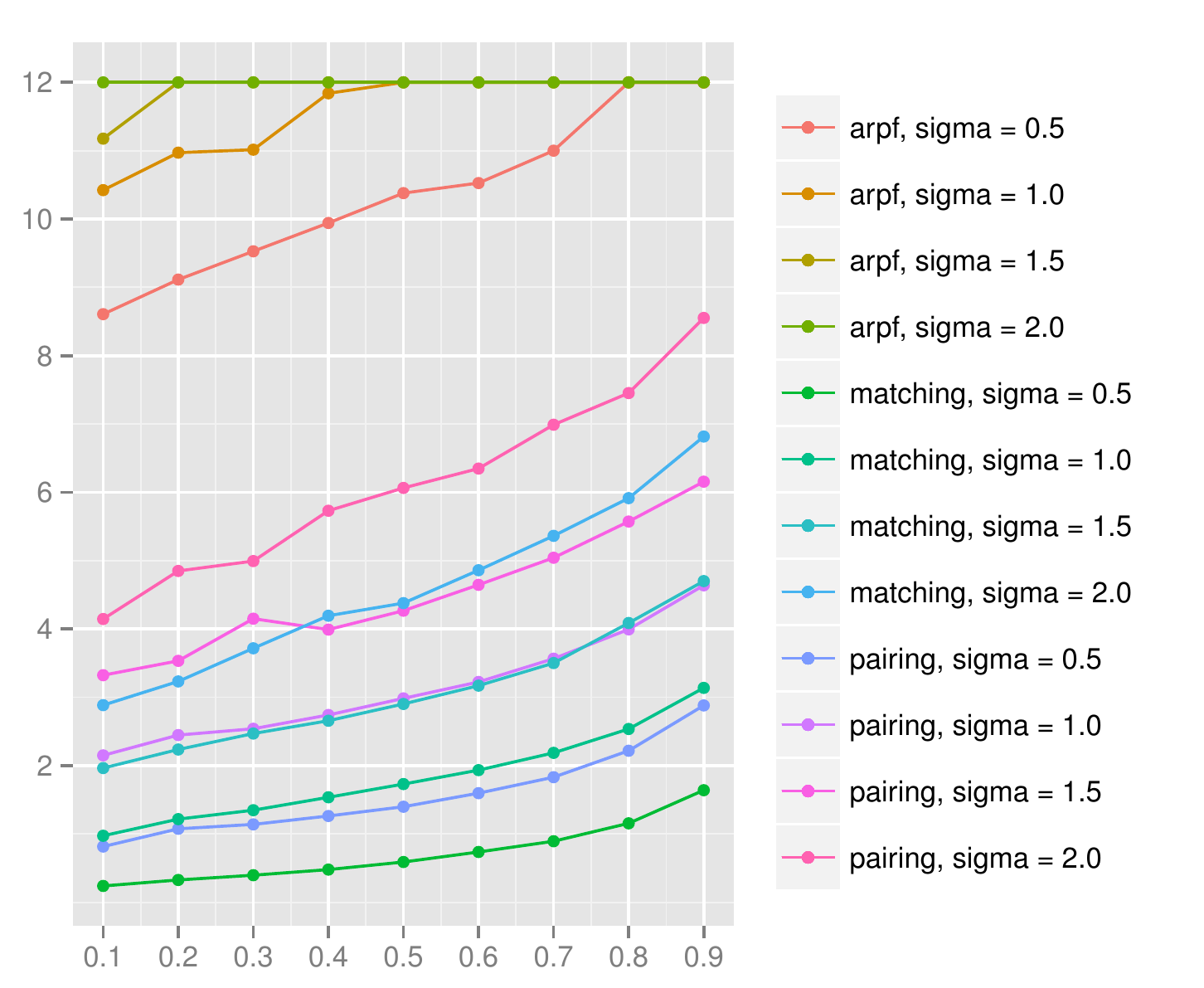}
\par\end{centering}

}\quad{}\subfloat[]{\noindent \begin{centering}
\includegraphics[scale=0.55]{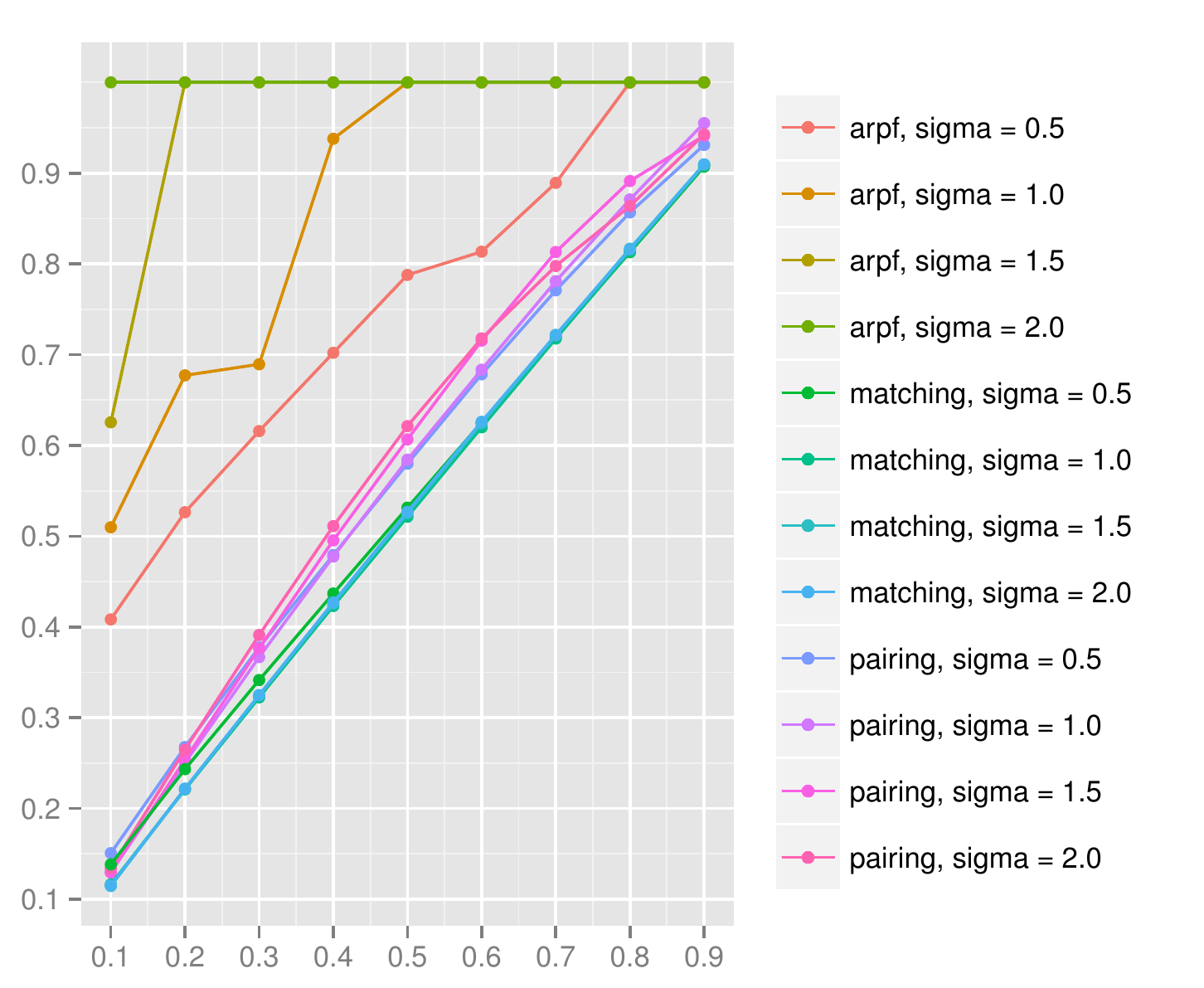}
\par\end{centering}

}\protect\caption{\label{fig:avgcosts_ess}Graphs of (a) $\log_{2}\bar{d}$ and (b)
$\overline{N^{\text{eff}}}/N$ against $\tau$ for the various forest
selection schemes and choices of $\sigma$.}
\end{figure*}

\begin{figure*}
\subfloat[\label{fig:closeup_dbar}Plot of $\bar{d}$ against $\tau$ for $\sigma=1$.]{\noindent \begin{centering}
\includegraphics[scale=0.55]{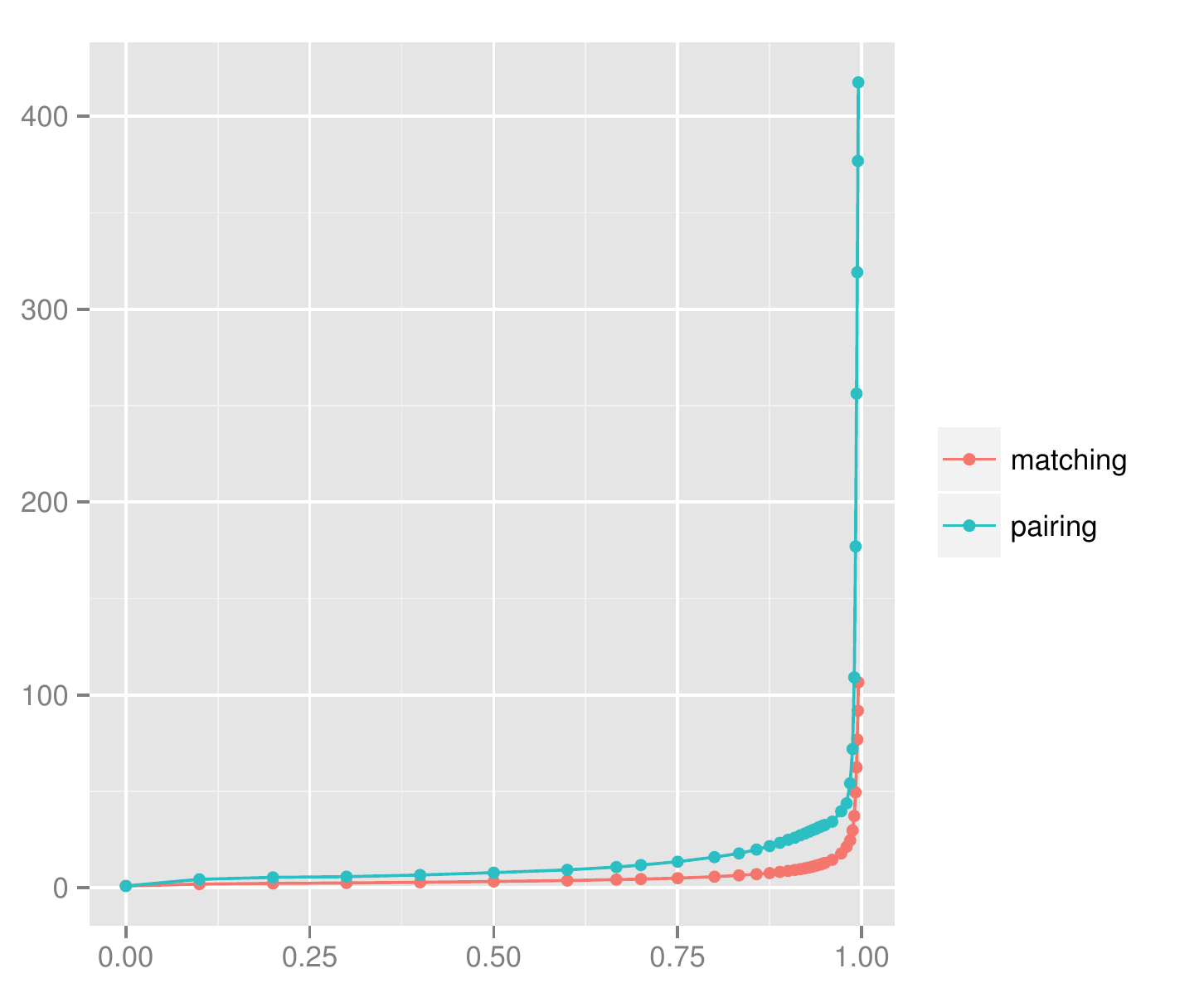}
\par\end{centering}

}\subfloat[\label{fig:fixed_ESS_increasing_N_match}Plot of $\bar{d}$ against
$N$ with $N\tau=2048$ for $\sigma=1$ using the matching strategy.]{\noindent \begin{centering}
\includegraphics[scale=0.55]{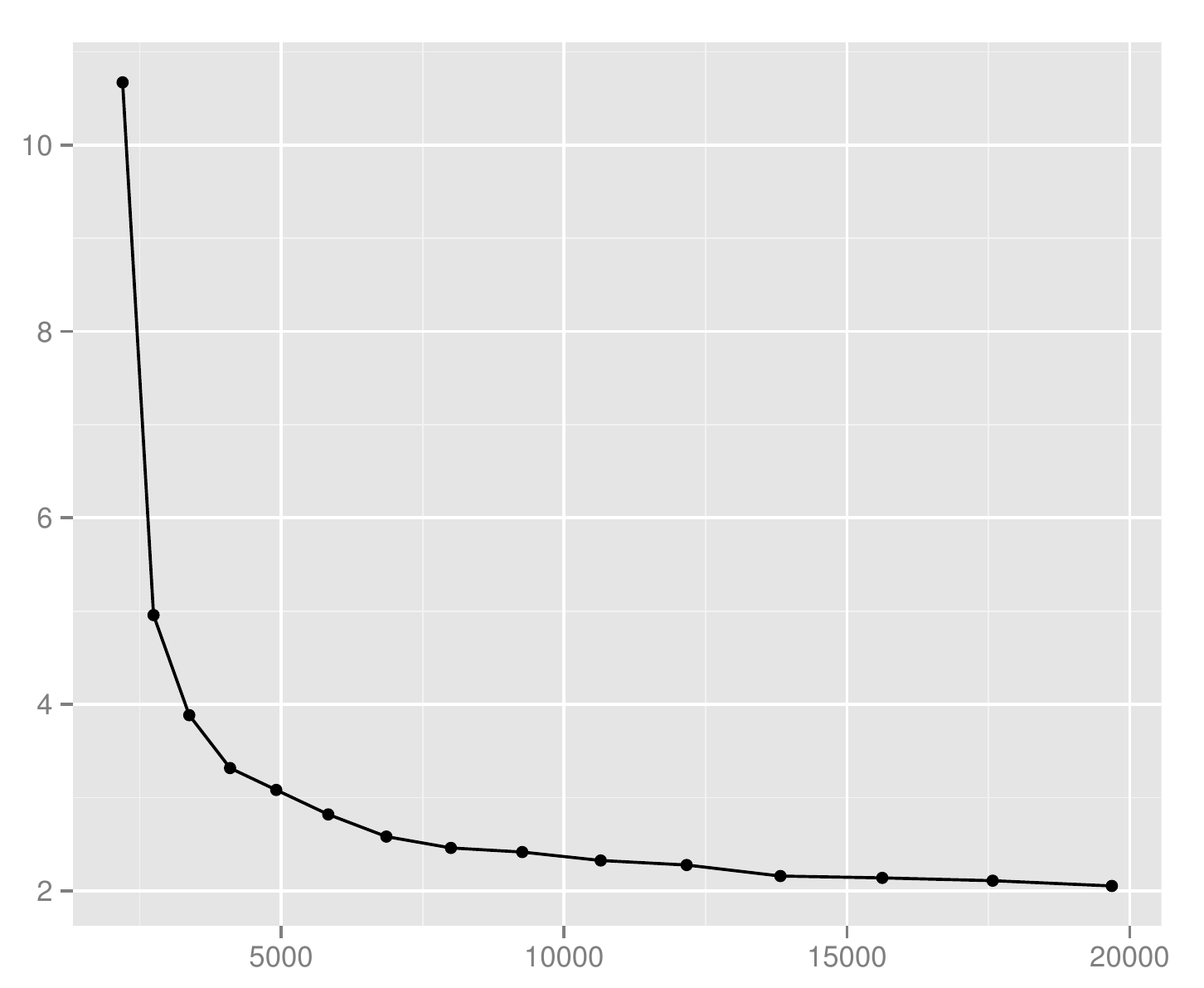}
\par\end{centering}

}

\protect\caption{Dependence of $\bar{d}$ on $\tau$}
\end{figure*}

\subsection{Connection to existing sampling schemes\label{sub:other_resampling}}

Resampling methods other than multinomial can be implemented using
trees as well. In order to make this concrete, we assume that the
tree is ordered, i.e., the children of each node written in sequence
as $\chi_{1},\chi_{2},\ldots$. This imposes only the constraint that
the labelling of children is consistent, and allows the specification
of Algorithm~\ref{alg:select_tree}, which implements Algorithm~\ref{alg:sample_tree}
with a single uniform random variable using the recycling method of
\citep[Section~III.3.7]{devroye1986non}. Proposition~\ref{prop:tree_sampling_functionu}
and the Remark that follows then imply that we can view this algorithm
as a tree-based implementation of the inverse transform method for
sampling from a categorical distribution.

\begin{algorithm}
\protect\caption{Select a value in $\ell(\nu)$ given a $u\in[0,1]$\label{alg:select_tree}}

\texttt{select}$(\nu,u)$
\begin{enumerate}
\item If $\mathcal{C}(\nu)=\emptyset$, return the only element in $\ell(\nu)$.
\item Otherwise, let $\chi_{1},\ldots,\chi_{|\mathcal{C}(\nu)|}$ be the
children of $\nu$ in order.
\item Set 
\[
i\leftarrow\min\left\{ k:\sum_{j=[k]}\mathcal{V}_{2}(\chi_{j})\geq u\sum_{j\in\left[\left|\mathcal{C}(\nu)\right|\right]}\mathcal{V}_{2}(\chi_{j})\right\} .
\]

\item Return 
\[
\mathtt{select}\left(\chi_{i},\frac{u\sum_{j\in\left[\left|\mathcal{C}(\nu)\right|\right]}\mathcal{V}_{2}(\chi_{j})-\sum_{j=[i-1]}\mathcal{V}_{2}(\chi_{j})}{\mathcal{V}_{2}(\chi_{i})}\right).
\]
\end{enumerate}
\end{algorithm}

\begin{prop}
\label{prop:tree_sampling_functionu}Assume that the tree is ordered
such that for each node its children $(\chi_{i})$ have $j\in\ell(\chi_{i}),k\in\ell(\chi_{i+1})\implies j<k$.
Calling Algorithm~\ref{alg:select_tree} with $(\nu,u)$ returns
$\min\left\{ k:\sum_{j\in\ell(\nu)\cap[k]}c^{j}\geq u\sum_{j\in\ell(\nu)}c^{j}\right\} $.\end{prop}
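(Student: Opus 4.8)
The plan is to proceed by induction on the depth of the subtree $\mathcal{S}(\nu)$, verifying that the recursion implements the inverse transform method for the categorical distribution with masses proportional to $\{c^j : j\in\ell(\nu)\}$, laid out according to the ordering on leaf indices. First I would record the key consequence of the ordering hypothesis: if $\chi_1,\ldots,\chi_{|\mathcal{C}(\nu)|}$ are the children of $\nu$ in order, then the leaf-index sets $\ell(\chi_1),\ldots,\ell(\chi_{|\mathcal{C}(\nu)|})$ partition $\ell(\nu)$ into \emph{contiguous} blocks, i.e. every index in $\ell(\chi_i)$ is strictly smaller than every index in $\ell(\chi_{i+1})$. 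This is what lets me identify the ``prefix sum over children'' computed in Step 3 with a genuine cumulative sum $\sum_{j\in\ell(\nu)\cap[k]}c^j$ over the leaves. I would also note the identity $\mathcal{V}_2(\chi_i)=\sum_{j\in\ell(\chi_i)}c^j$, which holds because the values are populated via (\ref{eq:value_nodes_children}).

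The base case is a leaf, where $\mathcal{C}(\nu)=\emptyset$, $\ell(\nu)$ is a singleton $\{k\}$, and the claimed output $\min\{k' : \sum_{j\in\{k\}\cap[k']}c^j \geq u\sum_{j\in\{k\}}c^j\}$ is just $k$ for any $u\in[0,1]$; Step 1 returns exactly this. For the inductive step, let $T:=\sum_{j\in[|\mathcal{C}(\nu)|]}\mathcal{V}_2(\chi_j)=\sum_{j\in\ell(\nu)}c^j$. Step 3 selects the child index $i$ as the smallest $k$ with $\sum_{j=1}^{k}\mathcal{V}_2(\chi_j)\geq uT$; by contiguity of the blocks this $i$ is precisely the block containing the leaf $\min\{k : \sum_{j\in\ell(\nu)\cap[k]}c^j \geq u\sum_{j\in\ell(\nu)}c^j\}$ that the proposition predicts. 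So it remains to show that the recursive call on $\chi_i$ with the rescaled argument
\[
u' := \frac{uT - \sum_{j=1}^{i-1}\mathcal{V}_2(\chi_j)}{\mathcal{V}_2(\chi_i)}
\]
returns the correct leaf \emph{within} that block. I would verify that $u'\in[0,1]$ (it is nonnegative by minimality of $i$ in Step 3, and at most $1$ because $\sum_{j=1}^{i}\mathcal{V}_2(\chi_j)\geq uT$), so the inductive hypothesis applies to $\chi_i$ and $u'$.

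The heart of the argument is then a bookkeeping check that the target threshold transforms correctly under this rescaling. Writing $\sigma_{i-1}:=\sum_{j<i}\mathcal{V}_2(\chi_j)=\sum_{j\in\ell(\nu),\,j\notin\ell(\chi_i),\,j<\ell(\chi_i)}c^j$ for the mass in the earlier blocks, I would show that for any leaf $k\in\ell(\chi_i)$ the inequality $\sum_{j\in\ell(\chi_i)\cap[k]}c^j \geq u'\,\mathcal{V}_2(\chi_i)$ is equivalent, after multiplying through by $\mathcal{V}_2(\chi_i)$ and using the definition of $u'$, to $\sigma_{i-1}+\sum_{j\in\ell(\chi_i)\cap[k]}c^j \geq uT$. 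Since the blocks are contiguous and $k\in\ell(\chi_i)$, the left-hand side equals $\sum_{j\in\ell(\nu)\cap[k]}c^j$, so the two characterizations of the returned index coincide. The main obstacle I anticipate is purely notational: keeping the nested cumulative sums aligned and making the contiguity argument airtight, since without the ordering hypothesis the prefix sums over children would not correspond to prefix sums over leaf indices. Everything else is routine arithmetic and induction.
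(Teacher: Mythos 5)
Your proposal is correct and follows essentially the same route as the paper's proof: induction over the tree, with the ordering hypothesis used to identify prefix sums over children with prefix sums over leaf indices, and a bookkeeping check that the rescaled argument $u'$ passed to the recursive call transforms the global threshold $u\sum_{j\in\ell(\nu)}c^{j}$ into the correct local one. The only differences are cosmetic and mildly advantageous to you: you induct on depth rather than on the number of leaves (which sidesteps the single-child edge case where a child has as many leaves as its parent), and you explicitly verify $u'\in[0,1]$ before invoking the induction hypothesis, a point the paper leaves implicit.
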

\begin{rem*}
The ordering specified above is w.r.t. the indices of particles and
imposes no real constraint on how the tree is actually constructed,
as long as a specific order is used in step 2 of Algorithm~\ref{alg:select_tree}.
If an alternative ordering is assumed in Proposition~\ref{prop:tree_sampling_functionu}
the resulting returned value will still be deterministic and of the
form given with a slight modification to account for this alternative
ordering.

\end{rem*}
Multinomial resampling corresponds to sampling $N$ i.i.d. uniform
random variables $u^{1},\ldots,u^{N}$ and calling \texttt{select}$(\nu,u^{i})$
for each $i\in[N]$, thereby providing $N$ i.i.d. draws from a categorical
distribution. One can view other resampling methods as making dependent
draws from a categorical distribution by the inverse transform method
by using random variables $u^{1},\ldots,u^{N}$ that are not i.i.d.
but for which the distribution of $u^{K}$, where $K$ is chosen uniformly
at random from $[N]$, is uniform on $[0,1]$ \citep[see, e.g.,][]{Douc2005}.
Therefore, to implement alternative resampling schemes, one again
calls \texttt{select}$(\nu,u^{i})$ for each $i\in[N]$, but with
$u^{1},\ldots,u^{N}$ are distributed in a dependent fashion as in
\citep{Douc2005}. The dependent $u^{i}$ can be interpreted as ``trickling''
down a tree whose leaves represent ancestor indices in a manner reminiscent
of the approach in \citep{crisan2002minimal}, which most closely
resembles the systematic resampling scheme in \citep{Kitagawa1996}.

\subsection{Concluding Remarks}

For ease of presentation, we have chosen to work with a particularly
simple version of $\alpha$SMC, in which new samples are proposed
using the HMM Markov kernel $f$. As noted in \citep{whiteley2013role},
the algorithm is easily generalized to accommodate other proposal
kernels.

This paper, and the methodology of \citep{whiteley2013role} more
generally, naturally complements the contribution of \citep{jun2012entangled}.
In particular, the methods in the latter allow particles to be ``reconstructed''
on a device on the basis of only a small amount of communicated information,
and could be used in tandem with the algorithms here in appropriate
applications.

Both the approaches in Section~\ref{sub:partitioning_strategies}
resemble in some ways greedy strategies for solving the classical
Partition problem. It would be of interest to consider analogues of
more sophisticated solutions to this problem such as those in \citep{karmarkar1982differencing}
and \citep{korf1995approximate}. More generally, it would be of interest
to have quantitative theoretical results enabling the comparison of
particular tree structures and partition selection schemes.

In practice, it may often be the case that devices are homogeneous,
with the network connections between any two devices being of similar
latency and bandwidth. In such situations, one will often create the
structure of the tree at levels above the device and particle layers
in a highly structured way. The use of a randomly generated tree may
be beneficial, as suggested by the Random adaptation rule of \citep[Section~5.4]{whiteley2013role},
which had however no hierarchy. A random permutation of the device
nodes, in an otherwise constant tree was used in Section~\ref{sub:Numerical-illustrations}
for this reason.

Finally, this paper is concerned primarily with $\alpha$ matrices
that are induced by disjoint unions of complete graphs, and hence
have a particular structure. It would be of interest to explore similar
results and methodology for more general $\alpha$ matrices.

\subsection*{Acknowledgements}

We thank Dr. Kari Heine for assistance with the figures. The second
author is supported in part by EPSRC grant EP/K023330/1.

\appendix

\section{Proofs}
\begin{proof}[Proof of Proposition~\ref{prop:ess_properties}]
It is straightforward to show that if $a\in\mathbb{A}_{V}$ then
$\sum_{i}a^{ij}=\sum_{i\in V}a^{ij}=\mathbb{I}(j\in V)$. Therefore,
\begin{eqnarray*}
N^{{\rm eff}}(a,c) & = & \frac{\left(\sum_{i}\sum_{j}a^{ij}c^{j}\right)^{2}}{\sum_{i}\left(\sum_{j}a^{ij}c^{j}\right)^{2}}=\frac{\left(\sum_{i\in V}\sum_{j\in V}a^{ij}c^{j}\right)^{2}}{\sum_{i\in V}\left(\sum_{j\in V}a^{ij}c^{j}\right)^{2}}\\
 & = & \frac{\left(\sum_{j\in V}c^{j}\right)^{2}}{\sum_{i\in V}\left(\sum_{j\in V}a^{ij}c^{j}\right)^{2}}.
\end{eqnarray*}
We define $x_{1}:=\sum_{j\in V}c^{j}$, $x_{2}:=\sum_{j\in V'}c^{j}$,
$y_{1}:=\sum_{i\in V}\left(\sum_{j\in V}a^{ij}c^{j}\right)^{2}$,
$y_{2}:=\sum_{i\in V'}\left(\sum_{j\in V'}\left(a'\right)^{ij}c^{j}\right)^{2}$
and $\tilde{y}_{2}:=\sum_{i\in V'}\left(\sum_{j\in V'}\left(\tilde{a}'\right)^{ij}c^{j}\right)^{2}$.
This allows us to write $N^{{\rm eff}}\left(a,c\right)=x_{1}^{2}/y_{1}$,
$N^{{\rm eff}}\left(a',c\right)=x_{2}^{2}/y_{2}$ and $N^{{\rm eff}}\left(\tilde{a}',c\right)=x_{2}^{2}/\tilde{y}_{2}$,
and since $V\cap V'=\emptyset$, $N^{{\rm eff}}\left(a+a',c\right)=(x_{1}+x_{2})^{2}/(y_{1}+y_{2})$
and $N^{{\rm eff}}\left(a+\tilde{a}',c\right)=(x_{1}+x_{2})^{2}/(y_{1}+\tilde{y}_{2})$.

1. The lower bound holds because
\begin{eqnarray*}
\left(\sum_{i}\sum_{j}a^{ij}c^{j}\right)^{2} & = & \sum_{i,k}\left(\sum_{j}a^{ij}c^{j}\right)\left(\sum_{j}a^{kj}c^{j}\right)\\
 & \geq & \sum_{i}\left(\sum_{j}a^{ij}c^{j}\right)^{2},
\end{eqnarray*}
 so $N^{{\rm eff}}(a,c)\geq1$. The upper bound holds because, using
Jensen's inequality,
\begin{eqnarray*}
\left(\sum_{i}\sum_{j}a^{ij}c^{j}\right)^{2} & = & |V|^{2}\left(\sum_{i\in V}|V|^{-1}\sum_{j\in V}a^{ij}c^{j}\right)^{2}\\
 & \leq & |V|^{2}\sum_{i\in V}|V|^{-1}\left(\sum_{j\in V}a^{ij}c^{j}\right)^{2}\\
 & = & |V|\sum_{i\in V}\left(\sum_{j\in V}a^{ij}c^{j}\right)^{2},
\end{eqnarray*}
 so $N^{{\rm eff}}(a,c)\leq|V|$. The upper bound is attained when
$a^{ij}=|V|^{-1}\mathbb{I}\left(i,j\in V\right)$ since then
\begin{eqnarray*}
\sum_{i\in V}\left(\sum_{j\in V}a^{ij}c^{j}\right)^{2} & = & \sum_{i\in V}\left(\sum_{j\in V}|V|^{-1}c^{j}\right)^{2}\\
 & = & |V|^{-1}\left(\sum_{j\in V}c^{j}\right)^{2}.
\end{eqnarray*}

2. The result follows from
\begin{eqnarray*}
 &  & N^{{\rm eff}}\left(a,c\right)+N^{{\rm eff}}\left(a',c\right)-N^{{\rm eff}}\left(a+a',c\right)\\
 & = & \frac{x_{1}^{2}}{y_{1}}+\frac{x_{2}^{2}}{y_{2}}-\frac{(x_{1}+x_{2})^{2}}{y_{1}+y_{2}}\\
 & = & x_{1}^{2}\left\{ \frac{y_{2}}{y_{1}(y_{1}+y_{2})}\right\} +x_{2}^{2}\left\{ \frac{y_{1}}{y_{2}(y_{1}+y_{2})}\right\} -\frac{2x_{1}x_{2}}{y_{1}+y_{2}}\\
 & = & \frac{y_{1}y_{2}}{y_{1}+y_{2}}\left(\frac{x_{1}}{y_{1}}-\frac{x_{2}}{y_{2}}\right)^{2}\geq0,
\end{eqnarray*}
with equality only when $\frac{x_{1}}{y_{1}}=\frac{x_{2}}{y_{2}}$,
corresponding to $\frac{\sum_{j\in V}c^{j}}{N^{{\rm eff}}(a,c)}=\frac{\sum_{j\in V'}c^{j}}{N^{{\rm eff}}(a',c)}$.

3. Since $N^{{\rm eff}}(a',c)\leq N^{{\rm eff}}(\tilde{a}',c)\implies y_{2}\geq\tilde{y}_{2}$,
\begin{eqnarray*}
 &  & N^{{\rm eff}}(a+\tilde{a}',c)-N^{{\rm eff}}(a+a',c)\\
 &  & =\frac{\left(x_{1}+x_{2}\right)^{2}}{y_{1}+\tilde{y}_{2}}-\frac{\left(x_{1}+x_{2}\right)^{2}}{y_{1}+y_{2}}\geq0,
\end{eqnarray*}
with equality only when $y_{2}=\tilde{y}_{2}$, corresponding to $N^{{\rm eff}}(a',c)\leq N^{{\rm eff}}(\tilde{a}',c)$.

4. We have
\begin{eqnarray*}
 &  & N^{{\rm eff}}\left(a_{1}+a_{2},c\right)\\
 &  & =\frac{(x_{1}+x_{2})^{2}}{y_{1}+y_{2}}\\
 &  & =\frac{x_{1}^{2}}{y_{1}}\cdot\frac{y_{1}}{y_{1}+y_{2}}+\frac{2x_{1}x_{2}}{y_{1}+y_{2}}+\frac{x_{2}^{2}}{y_{2}}\cdot\frac{y_{2}}{y_{1}+y_{2}}\\
 &  & \geq\min_{i}\left\{ \frac{x_{i}^{2}}{y_{i}}\right\} =\min\left\{ N^{{\rm eff}}\left(a,c\right),N^{{\rm eff}}\left(a',c\right)\right\} .
\end{eqnarray*}

\end{proof}

\begin{proof}[Proof of Proposition~\ref{prop:partial_order}]
We prove the result for $\mathbb{G}$ since the result for $\mathbb{A}_{\mathbb{G}}$
then follows. Let $V\subseteq[N]$ and consider $G_{1}=(V,E_{1})$,
$G_{2}=(V,E_{2})$ and $G_{3}=(V,E_{3})$. It suffices to check that
$\preceq$ is reflexive ($G_{1}\preceq G_{1})$, antisymmetric ($G_{1}\preceq G_{2}$
and $G_{2}\preceq G_{1}$ implies $G_{1}=G_{2}$) and transitive ($G_{1}\preceq G_{2}$
and $G_{2}\preceq G_{3}$ implies $G_{1}\preceq G_{3}$). Since $E_{1}\subseteq E_{1}$,
it follows that $G_{1}\preceq G_{1}$. When $G_{1}\preceq G_{2}$
and $G_{2}\preceq G_{1}$, this implies $E_{1}\subseteq E_{2}$ and
$E_{2}\subseteq E_{1}$ and it follows that $E_{1}=E_{2}$ and so
$G_{1}=G_{2}$. Finally, $G_{1}\preceq G_{2}$ and $G_{2}\preceq G_{3}$
implies that $E_{1}\subseteq E_{2}$ and $E_{2}\subseteq E_{3}$ and
so $E_{1}\subseteq E_{3}$ and therefore $G_{1}\preceq G_{3}$.
\end{proof}

\begin{proof}[Proof of Proposition~\ref{prop:order_preserving}]
Since $a\preceq\tilde{a}$ we have that $a,\tilde{a}\in\mathbb{A}_{\mathbb{G}}\cap\mathbb{A}_{V}$
for some $V\subseteq[N]$. Therefore, for some $K,\tilde{K}\in[N]$
we can write $a=\sum_{k\in[K]}\phi(\kappa(V_{k}))$ and $\tilde{a}=\sum_{\tilde{k}\in[\tilde{K}]}\phi(\kappa(\tilde{V}_{\tilde{k}}))$
where each $V_{k}$ and $\tilde{V}_{\tilde{k}}$ are subsets of $V$.
Since $a\preceq\tilde{a}$, for each $k\in[K]$ there exists $\tilde{k}\in[\tilde{K}]$
such that $V_{k}\subseteq\tilde{V}_{\tilde{k}}$. We now define a
sequence, with $a_{0}=a$, and for $i\in[\tilde{K}]$
\[
a_{i}:=a_{i-1}+\phi(\kappa(\tilde{V}_{i}))-\sum_{k\in[K],V_{k}\cap\tilde{V}_{i}\neq\emptyset}\phi(\kappa(V_{k})),
\]
 and note that $a_{\tilde{K}}=\tilde{a}$. Now for each $i\in[\tilde{K}]$,
$a_{i}\in\mathbb{A}_{\mathbb{G}}\cap\mathbb{A}_{V}$ and letting $\check{V}_{i}:=\bigcup_{j=1}^{i}\tilde{V}_{j}$
and 
\[
b_{i}:=\sum_{k\in[K],V_{k}\cap\check{V}_{i}=\emptyset}\phi(\kappa(V_{k}))+\sum_{j=1}^{i-1}\phi(\kappa(\tilde{V}_{j})),
\]
we can write $a_{i-1}=b_{i}+\sum_{k\in[K],V_{k}\cap\tilde{V}_{i}\neq\emptyset}\phi(\kappa(V_{k}))$
and $a_{i}=b_{i}+\phi(\kappa(\tilde{V}_{i}))$. From the first part
of Proposition~\ref{prop:ess_properties} we have that $N^{{\rm eff}}\left(\sum_{k\in[K],V_{k}\cap\tilde{V}_{i}\neq\emptyset}\phi(\kappa(V_{k})),c\right)\leq\left|\tilde{V}_{i}\right|=\phi(\kappa(\tilde{V}_{i}))$
and so by the monotonicity property in Proposition~\ref{prop:ess_properties}
we have $N^{{\rm eff}}(a_{i-1},c)\leq N^{{\rm eff}}(a_{i},c)$ for
each $i\in[\tilde{K}]$. It follows that $N^{{\rm eff}}(a,c)\leq N^{{\rm eff}}(\tilde{a},c)$.
\end{proof}

\begin{proof}[Proof of Proposition~\ref{prop:alpha_lower_bound}]
The first inequality follows from Proposition~\ref{prop:order_preserving}
since $a\preceq\tilde{a}$. For the second inequality, assume that
$\min_{k}\left|V_{k}\right|^{-1}N^{{\rm eff}}(a_{k},c)\geq D$. This
implies that for any $k\in[K]$,
\[
\sum_{i\in V_{k}}\left(\sum_{j\in V_{k}}a_{k}^{ij}c^{j}\right)^{2}\leq\frac{1}{D|V_{k}|}\left(\sum_{j\in V_{k}}c^{j}\right)^{2}.
\]
Therefore
\begin{eqnarray*}
N^{{\rm eff}}\left(a,c\right) & = & \frac{\left(\sum_{k\in[K]}\sum_{j\in V_{k}}c^{j}\right)^{2}}{\sum_{k\in[K]}\sum_{i\in V_{k}}\left(\sum_{j\in V_{k}}a_{k}^{ij}c^{j}\right)^{2}}\\
 & \geq & D\frac{\left(\sum_{k\in[K]}\sum_{j\in V_{k}}c^{j}\right)^{2}}{\sum_{k\in[K]}|V_{k}|^{-1}\left(\sum_{j\in V_{k}}c^{j}\right)^{2}}\\
 & = & DN^{{\rm eff}}\left(\tilde{a},c\right).
\end{eqnarray*}

\end{proof}

\begin{proof}[Proof of Proposition~\ref{prop:recursiveboundinduction}]
The proof is by induction. Note that $\tau\in[0,1]$. If $|V|=1$
then $|P|=1$ and $N_{c}^{{\rm eff}}(P)=1$, so the claim is true.
Now assume that the claim holds true for all $V$ with $|V|\in[s-1]$,
$s\in\mathbb{N}$, and consider the case where $|V|=s$. First, note
that if $P=\{V\}$ then $N^{{\rm eff}}(P,c)=|V|$ by Proposition~\ref{prop:ess_properties}
and so if $|P|=1$ the claim is true. It remains to check that if
$|P|>1$ then $a=\sum_{k\in[K]}a_{k}$ satisfies the claim, where
$a_{k}={\tt choose.a}(V_{k},\tau/\rho(P,c))$. By the induction hypothesis,
for each $k\in[K]$, $N^{{\rm eff}}(a_{k},c)/\left|V_{k}\right|\geq\tau/\rho(P,c)$
since $|V_{k}|\in[s-1]$ and $\tau/\rho(P,c)\in[0,1]$. Then by Proposition~\ref{prop:alpha_lower_bound},
with $\tilde{a}=\sum_{k\in[K]}\phi(\kappa(V_{k}))$,
\begin{eqnarray*}
N^{{\rm eff}}\left(a,c\right) & \geq & \min_{k}\left\{ \frac{N^{{\rm eff}}(a_{k},c)}{|V_{k}|}\right\} N^{{\rm eff}}\left(\tilde{a},c\right)\\
 & \geq & \frac{\tau}{\rho(P,c)}\rho(P,c)\left|V\right|=\tau\left|V\right|,
\end{eqnarray*}
and we conclude.
\end{proof}

\begin{proof}[Proof of Proposition~\ref{prop:tree_sample_correctness}]
Let $s_{\nu}(j)$ denote the probability that Algorithm~\ref{alg:sample_tree}
returns $j\in\ell(\nu)$. Given $j\in\ell(\nu)$, let $\nu^{1}$ be
the parent of $\nu_{j}$, $\nu^{2}$ be the parent of $\nu^{1}$,
etc., until $\nu^{m}=\nu$ is the parent of $\nu^{m-1}$. Then 
\begin{eqnarray*}
s_{\nu}(j) & = & \frac{c^{j}}{\sum_{k\in\ell(\nu^{1})}c^{k}}\prod_{i\in[m-1]}\frac{\sum_{k\in\ell(\nu^{i})}c^{k}}{\sum_{k\in\ell(\nu^{i+1})}c^{k}},\\
 & = & \frac{c^{j}}{\sum_{k\in\ell(\nu)}c^{k}}=p_{\mathcal{S}(\nu)}(j).
\end{eqnarray*}

\end{proof}

\begin{proof}[Proof of Proposition~\ref{prop:optimal_pairing}]
We define $x_{i}:=\sum_{j\in V_{i}}c^{j}$ for $i\in[2M]$ and it
suffices to show that $P'$ minimizes the denominator of $\rho(\cdot,c)$,
\[
r(P',c):=\left(2\left|V_{1}\right|\right)^{-1}\sum_{S\in P'}\left(\sum_{i\in S}x_{i}\right)^{2},
\]
since each element of any pairing of $P$ is of size $2\left|V_{1}\right|$.
We first prove that $V_{1}\cup V_{2M}$ is a member of at least one
pairing of $P$ that minimizes $r(\cdot,c)$. Indeed, assume that
a pairing $\check{P}$ that minimizes $r(\cdot,c)$ is given. We will
show that a pairing $\check{P}'$ containing $V_{1}\cup V_{2M}$ exists
for which $r(\check{P}',c)\leq r(\check{P},c)$. Let $V_{1}\cup V_{j}$
and $V_{k}\cup V_{2M}$ be elements of $\check{P}$. We define $\check{P}'=\check{P}\setminus\{V_{1}\cup V_{j},V_{k}\cup V_{2M}\}\cup\{V_{1}\cup V_{2M},V_{k}\cup V_{j}\}$.
Then
\begin{eqnarray*}
 &  & 2\left|V_{1}\right|\left(r(\check{P},c)-r(\check{P}',c)\right)\\
 & = & \sum_{S\in\check{P}}\left(\sum_{i\in S}x_{i}\right)^{2}-\sum_{S\in\check{P}'}\left(\sum_{i\in S}x_{i}\right)^{2}\\
 & = & \left(x_{1}+x_{j}\right)^{2}+\left(x_{k}+x_{2M}\right)^{2}\\
 &  & -\left(x_{1}+x_{2M}\right)^{2}-\left(x_{k}+x_{j}\right)^{2}\\
 & = & 2x_{1}x_{j}+2x_{k}x_{2M}-2x_{1}x_{2M}-2x_{k}x_{j}\\
 & = & 2x_{k}(x_{2M}-x_{j})-2x_{1}(x_{2M}-x_{j})\\
 & = & 2(x_{k}-x_{1})(x_{2M}-x_{j})\geq0,
\end{eqnarray*}
since $x_{1}$ and $x_{2M}$ are the minimal and maximal values of
$\{x_{i}:i\in[2M]\}$, respectively.

Now, let $\check{P}'$ be a pairing of $P$ and $S=V_{1}\cup V_{2M}\in\check{P}'$.
Then $r(\check{P}',c)=r(\{S\},c)+r(\check{P}'\setminus\{S\},c)$ and
so it follows that if $\check{P}'$ minimizes $r(\cdot,c)$ then $\check{P}'\setminus\{S\}$
is a pairing of $P\setminus\left\{ V_{1},V_{2M}\right\} $ that minimizes
$r(\cdot,c)$. It then follows that at least one pairing $\check{P}'$
of $P$ that minimizes $r(\cdot,c)$ is the union of $V_{1}\cup V_{2M}$
and a pairing of $P\setminus\left\{ V_{2M},V_{1}\right\} $ that minimizes
$r(\cdot,c)$. But then the argument above shows that $V_{2}\cup V_{2M-1}$
is a valid element of such a minimizing pairing. Continuing, we obtain
that $P'$ is a pairing of $P$ that minimizes $r(\cdot,c)$ and we
conclude.
\end{proof}

\begin{proof}[Proof of Proposition~\ref{prop:single_match_optimal}]
From (\ref{eq:fp_first}) we can write $\rho(P{}_{k,l},c)=\rho(P,c)\frac{r(P,c)}{r(P_{k,l},c)}$
where $r(P,c):=\sum_{S\in P}|S|\left(|S|^{-1}\sum_{j\in S}c^{j}\right)^{2}$.
It suffices therefore to find $k,l\in[K]$ minimizing $r(P_{k,l},c)$.
Letting $m_{i}=\left|V_{i}\right|$ and $x_{i}=\sum_{j\in V_{i}}c^{j}$,
we can write 
\begin{eqnarray*}
r(P_{k,l},c) & = & r(P,c)-\frac{x_{k}^{2}}{m_{k}}-\frac{x_{l}^{2}}{m_{l}}+\frac{\left(x_{k}+x_{l}\right)^{2}}{m_{k}+m_{l}}\\
 & = & r(P,c)-\frac{m_{k}m_{l}}{m_{k}+m_{l}}\left(\frac{x_{k}}{m_{k}}-\frac{x_{l}}{m_{l}}\right)^{2},
\end{eqnarray*}
the equality following along the same lines as the proof of the second
part of Proposition~\ref{prop:ess_properties}, and we conclude.
\end{proof}

\begin{proof}[Proof of Proposition~\ref{prop:tree_sampling_functionu}]
Let $\chi_{\nu}:=\left(\chi_{\nu,1},\ldots,\chi_{\nu,K}\right)$
be the ordered children of $\nu$, where $|\mathcal{C}(\nu)|=K$.
To alleviate notation, we define $c(\chi_{\nu,k}):=\sum_{j\in\ell(\chi_{\nu,k})}c^{j}$,
$k_{\nu}(u):=\min\left\{ k:\sum_{j\in[k]}c(\chi_{\nu,j})\geq u\sum_{j\in\ell(\nu)}c^{j}\right\} $
and $s_{\nu}(k):=\sum_{j\in[k]}c(\chi_{\nu,j})$. Algorithm~\ref{alg:select_tree}
with input $(\nu,u)$ returns $f_{\nu}(u)$, where
\[
f_{\nu}(u):=\begin{cases}
\min\ell(\nu) & \left|\ell(\nu)\right|=1,\\
f_{\chi_{\nu,k_{\nu}(u)}}\left(\frac{u\sum_{j\in\ell(\nu)}c^{j}-s_{\nu}(k_{\nu}(u)-1)}{c(\chi_{\nu,k_{\nu}(u)})}\right) & \text{otherwise.}
\end{cases}
\]
Now we prove by induction that $f_{\nu}(u)=\min\left\{ k:\sum_{j\in\ell(\nu)\cap[k]}c^{j}\geq u\sum_{j\in\ell(\nu)}c^{j}\right\} $.
If $|\ell(\nu)|=1$, the claim is trivially true. Now assume the claim
is true for $\{\nu:|\ell(\nu)|\in[p-1]\}$ and consider $\nu$ with
$|\ell(\nu)|=p>1$. We have
\[
f_{\nu}(u)=f_{\chi_{\nu,k_{\nu}(u)}}\left(\frac{u\sum_{j\in\ell(\nu)}c^{j}-s_{\nu}(k_{\nu}(u)-1)}{c(\chi_{\nu,k_{\nu}(u)})}\right)
\]
and we can apply the induction hypothesis since $|\ell(\chi_{\nu,k_{\nu}(u)})|\in[p-1]$.
Therefore, letting $y:=u\sum_{j\in\ell(\nu)}c^{j}-s_{\nu}(k_{\nu}(u)-1)$,
and $x:=\chi_{\nu,k_{\nu}(u)}$we can write $f_{\nu}(u)$ as
\begin{eqnarray*}
 &  & \min\left\{ k:\sum_{j\in\ell(x)\cap[k]}c^{j}\geq\frac{y}{c(x)}\sum_{j\in\ell(x)}c^{j}\right\} \\
 & = & \min\left\{ k:\sum_{j\in\ell(x)\cap[k]}c^{j}\geq u\sum_{j\in\ell(\nu)}c^{j}-s_{\nu}(k_{\nu}(u)-1)\right\} \\
 & = & \min\left\{ k:\sum_{j\in\ell(\nu)\cap[k]}c^{j}\geq u\sum_{j\in\ell(\nu)}c^{j}\right\} ,
\end{eqnarray*}
and we conclude.
\end{proof}
\bibliographystyle{unsrtnat}
\bibliography{sadm}

\end{document}